\DeclareMathOperator*{\argmax}{argmax}
\newif\if@restonecol
\DeclareMathOperator*{\argmin}{argmin}
\newtheorem{remark}{Remark}
\newtheorem{corollary}{Corollary}
\theoremstyle{definition}
\newtheorem{theorem}{Theorem}
\newtheorem{lemma}{Lemma}
\newcommand{\biggg}{\bBigg@{3}}
\newcommand{\Biggg}{\bBigg@{3.5}}
\begin{document}

\title{Spectral Efficiency Maximization for DMA-enabled Multiuser MISO with Statistical CSI\\}

\author{Hao Xu, Boyu Ning, Chongjun Ouyang, and Hongwen Yang
\vspace{-5pt}
\thanks{Hao Xu and Hongwen Yang are with the School of Information and Communication Engineering, Beijing University of Posts and Telecommunications, Beijing 100876, China (e-mail: Xu\_Hao@bupt.edu.cn; yanghong@bupt.edu.cn).}
\thanks{
B. Ning is with the National Key Laboratory of Wireless Communications, University of Electronic Science and Technology of China, Chengdu 611731, China (e-mail: boydning@outlook.com).}
\thanks{
Chongjun Ouyang was with the School of Electrical and Electronic Engineering, University College Dublin, Dublin, D04 V1W8, Ireland, and is now with the School of Electronic Engineering and Computer Science, Queen Mary University of London, London, E1 4NS, U.K. (e-mail: c.ouyang@qmul.ac.uk).}
}

\maketitle

\begin{abstract}
Dynamic metasurface antennas (DMAs) offer the potential to achieve large-scale antenna arrays with low power consumption and reduced hardware costs, making them a promising technology for future communication systems. This paper investigates the spectral efficiency (SE) of DMA-enabled multiuser multiple-input single-output (MISO) systems in both uplink and downlink transmissions, using only statistical channel state information (CSI) to maximize the ergodic sum rate of multiple users. For the uplink system, we consider two decoding rules: minimum mean square error (MMSE) with and without successive interference cancellation (SIC). For both decoders, we derive closed-form surrogates to substitute the original expressions of ergodic sum rate and formulate tractable optimization problems for designing DMA weights. Then, a weighted MMSE (WMMSE)-based algorithm is proposed to maximize the ergodic sum rate. For the downlink system, we derive an approximate expression for the ergodic sum rate and formulate a hybrid analog/digital beamforming optimization problem that jointly optimizes the digital precoder and DMA weights. A penalty dual decomposition (PDD)-based algorithm is proposed by leveraging the fractional programming framework. Numerical results validate the accuracy of the derived surrogates and highlight the superiority of the proposed algorithms over baseline schemes. It is shown that these algorithms are effective across various DMA settings and are particularly well-suited for system design in fast time-varying channels.
\end{abstract}

\begin{IEEEkeywords}
Dynamic metasurface antenna (DMA), multiuser MISO, spectral efficiency, statistical CSI.
\end{IEEEkeywords}
\vspace{-5pt}
\section{Introduction}
%The 5G communication system has basically achieved the initial visions of high reliability, low latency, high spectral efficiency (SE), and ultra-large connection. However, with the rapid proliferation of emerging technologies such as the internet of things (IoT), virtual reality (VR), augmented reality (AR), and autonomous driving, existing communication networks still face unprecedented challenges on ultra-high network capacity and access rate \cite{Tataria2021}.%
In recent years, the rapid development of fifth generation (5G) and the upcoming sixth generation (6G) mobile communication technologies has revolutionized global communication networks \cite{Andrews2014,Saad2020,Tataria2021}. As a key technology of the 5G network, massive multiple-input multiple-output (MIMO) has been proven to significantly enhance spectral efficiency (SE) and the throughput of wireless systems \cite{Larsson2014}. Nevertheless, high energy consumption, substantial hardware costs, and deployment size limitations pose challenges for implementing massive MIMO technology \cite{Rial2016,Mo2017}, particularly for the ultra-massive MIMO (UM-MIMO) envisioned for future 6G networks \cite{Ning2023}. In this context, researchers have been exploring innovative solutions to deploy massive MIMO systems without compromising performance or incurring prohibitive costs. Among these, dynamic metasurface antennas (DMA), enabled by advances in metamaterials, have emerged as a promising solution for achieving ultra-large-scale antenna arrays in future communication systems \cite{Shlezinger2019,Shlezinger2021}.

A DMA is a planar array composed of radiating metamaterial elements. These elements are typically implemented using structures such as microstrip lines, parallel plates, and cavities, which can interact with guided waves to facilitate signal transmission and reception. By controlling the tunable components integrated into each metamaterial element, one can achieve active beamforming of the transmitter or receiver. The initial implementation of DMAs was proposed in \cite{Sleasman2015} based on microstrip and then was extended to 2D architectures by composing waveguides and cavities \cite{Yoo2018}. In a 2D waveguide configuration, scattered waves from each element propagate in all directions, resulting in increased signal interference between ports. In the field of wireless communications, metamaterial components are not only utilized in DMA architectures but are also widely employed in reconfigurable intelligent surfaces (RIS), including passive RIS \cite{Wu2020,Huang2019,Han2019,Ning2021,Ouyang2022}, active RIS \cite{Zhang2023}, simultaneous transmitting and reflecting reconfigurable intelligent surfaces (STAR-RIS) \cite{Xu2021}, and multi-functional RIS \cite{Sun2024}, which can intelligently control wireless transmission environments between transmitters and receivers. To clearly distinguish between the concepts of DMA and RIS, it is important to highlight their key differences. Firstly, RIS is predominantly passive, meaning it does not connect to RF links and lacks baseband signal processing capabilities. Additionally, RIS is typically deployed at a location that is some distance away from both the base station (BS) and the users. This deployment necessitates at least two distinct channels: the user-to-RIS channel and the BS-to-RIS channel. In contrast, DMA is an active component that integrates with the RF systems at the transceivers. This integration allows DMA to facilitate both transmission and dynamic beamforming. Consequently, while DMA and RIS may share similar hardware structures, they serve distinctly different functions within communication systems.

{\color{black}The DMA-based transceivers can achieve hybrid analog/digital (A/D) beamforming more efficiently compared to traditional phase-shifter-based hybrid A/D architectures \cite{Shlezinger2021}. On the one hand, DMA eliminates the need for additional analog combining circuits, such as phase shifters, resulting in lower hardware costs. On the other hand, the tuning of DMA elements relies on simple components like varactor tubes, which yields reduced power consumption.}

Thanks to the ability of DMAs to enable large-scale antenna arrays with low cost and power consumption, their potential in massive MIMO systems has been extensively explored in recent years. For uplink transmission, the authors in \cite{Shlezinger2019} studied a DMA-assisted multi-user (MU) MIMO system and proposed two alternating optimization algorithms to maximize the achievable sum rate for both frequency-flat and frequency-selective channels. This work was extended to DMAs in MIMO orthogonal frequency division modulation (OFDM) receivers \cite{Wang2021}. For downlink MU-MIMO systems, authors in \cite{Wang2019} investigated the DMA tuning strategies, relaxing the Lorentzian constraint on DMA elements to make optimization tractable, which may reduce performance. To address this, authors in \cite{Kimaryo2023} developed an efficient algorithm without such relaxations. This study was further expanded to near-field communication scenarios \cite{Zhang2021,Zhang2022,Xu2024}. Specifically, authors in \cite{Zhang2021,Zhang2022} proposed a mathematical model to characterize near-field wireless channels in DMA-based downlink systems and investigated the sum rate maximization problem across three array architectures, i.e., fully digital, hybrid analog/digital, and DMA-based. Authors in \cite{Xu2024} explored the impact of spatial-wideband effect, near-field effect, and frequency selectivity on sum rate in wideband transmission. {\color{black}Besides, energy efficiency (EE), as a key performance metric \cite{Ge2022}, has been extensively studied in DMA assisted systems \cite{You2023,Chen2025}. Specifically, the authors in \cite{You2023} optimized EE performance in uplink multiuser systems using both instantaneous and statistical channel state information (CSI). This work was later extended to downlink multiuser systems with instantaneous CSI in \cite{Chen2025}.}

It is worth noting that most of the studies mentioned above are based on the assumption of instantaneous CSI to simplify the design. {\color{black}However, the large-scale dimension of antenna arrays leads to considerable computational complexity for real-time channel estimation. Additionally, the need for frequent reconfiguration of DMA elements results in increased power consumption and hardware degradation, especially in high mobility scenarios with fast time-varying channels.} Therefore, designing DMA-based systems that leverage statistical CSI is significant for practical systems. Up to now, research on DMA-assisted systems with statistical CSI is still limited. The authors in \cite{You2023} focused primarily on EE. A hybrid RIS and DMA-assisted MU multiple-input single-output (MISO) system has been studied based on full statistical CSI \cite{Zhang2024}, subjected to a constraint relaxation for DMA processing. Furthermore, the authors in \cite{Papazafeiropoulos2024} investigated a Stacked intelligent metasurface (SIM) aided MIMO system to maximize the downlink achievable rate with statistical CSI. Although SIM also implements holographic MIMO, it operates under different constraints than DMA. In summary, despite its importance, how to improve the SE performance in DMA-assisted systems with statistical CSI still remains an open problem.

To fill the research gap, in this paper, we investigate the SE performance of the DMA-enabled MU-MISO system for both uplink and downlink transmissions. As a primary work of this study, we consider a basic DMA architecture where the radiating elements are integrated with 1D waveguides coupled to RF chains on stacked microstrips\cite{Smith2017}. The main contributions of the paper are summarized as follows:
\begin{itemize}
\item For uplink systems, we aim to maximize the spectral efficiency (SE) performance under two classical decoding rules: minimum mean square error with successive interference cancellation (MMSE-SIC) decoding and without SIC (MMSE-nSIC) decoding. We derive a closed-form ergodic sum rate upper bound for the MMSE-SIC scenario and an approximation for the MMSE-nSIC scenario. Based on this, we formulate two tractable optimization problems concerning the phase shifts of the DMA and propose a weighted MMSE (WMMSE)-based algorithm. For optimizing the DMA weights, we introduce an element-wise refinement (EWR)-based method, which provides a closed-form solution for each step.
\item For downlink systems, we initially derived an approximate expression for the ergodic sum rate. Building on this, we formulate a hybrid analog/digital (A/D) beamforming optimization problem that incorporates coupling power constraints on the DMA weights and the digital precoder. To address this problem, we propose a penalty dual decomposition (PDD)-based algorithm by leveraging the fractional programming (FP) framework.

\item Numerical results validate the tightness and accuracy of all derived surrogates in our setup for a realistic communication environment. Compared to the existing benchmarks, the proposed algorithms have superiority in both uplink and downlink scenarios. For the uplink case, the proposed algorithm, relying only on statistical CSI, achieves over 90\% of the performance obtained with instantaneous CSI. For the downlink case, the proposed algorithm with statistical CSI can achieve 80\% of the performance with instantaneous CSI at lower SNR levels.
\end{itemize}

{\color{black}
The rest of this paper is structured as follows. Section \uppercase\expandafter{\romannumeral2} reviews the considered DMA architecture and presents the uplink and downlink system models based on statistical CSI. Section \uppercase\expandafter{\romannumeral3} analyzes and optimizes the uplink ergodic sum rate with statistical CSI, considering both MMSE-SIC and MMSE-nSIC combiners. Section \uppercase\expandafter{\romannumeral4} focuses on the analysis and optimization of the downlink ergodic sum rate. Section \uppercase\expandafter{\romannumeral5} provides numerical results to validate the analytical results and the effectiveness of the proposed optimization algorithms. Finally, Section \uppercase\expandafter{\romannumeral6} concludes the paper.  A comparison with related works is given in TABLE \ref{table1}, and the notations utilized throughout the paper are explained in TABLE \ref{table2}.}

\begin{table*}[!t]
\caption{Comparison with existing work.}
\label{table1}
\centering
{\color{black}\begin{tabular}{|>{\raggedright\arraybackslash}p{3cm} |>{\raggedright\arraybackslash}p{2cm}|>{\raggedright\arraybackslash}p{1cm} |>{\raggedright\arraybackslash}p{1cm}|>{\raggedright\arraybackslash}p{1cm}
|>{\raggedright\arraybackslash}p{1cm}|>{\raggedright\arraybackslash}p{1cm} |>{\raggedright\arraybackslash}p{1cm}|>{\raggedright\arraybackslash}p{1cm} |>{\raggedright\arraybackslash}p{1cm}|}
\hline
& \textbf{Our paper} & \cite{Shlezinger2019} & \cite{Wang2019} & \cite{Kimaryo2023} & \cite{Zhang2021} & \cite{Zhang2022} & \cite{Xu2024} & \cite{You2023} & \cite{Chen2025} \\ \hline
Uplink & $\checkmark$ & $\checkmark$ & &  &   &   & $\checkmark$ & $\checkmark$ &   \\ \hline
Downlink & $\checkmark$ &  & $\checkmark$ & $\checkmark$ & $\checkmark$ & $\checkmark$ &   &  & $\checkmark$ \\ \hline
Statistical CSI & $\checkmark$ & &  &  &   &   &   & $\checkmark$ &   \\ \hline
Instantaneous CSI &   & $\checkmark$ & $\checkmark$ &  $\checkmark$ & $\checkmark$ & $\checkmark$ & $\checkmark$ &  $\checkmark$ & $\checkmark$ \\ \hline
Spectral efficiency & $\checkmark$ & $\checkmark$ & $\checkmark$ & $\checkmark$  & $\checkmark$ & $\checkmark$ & $\checkmark$ &   & \\ \hline
Energy efficiency & & & & &   &   &   & $\checkmark$ & $\checkmark$ \\ \hline
Far field & $\checkmark$ & $\checkmark$ & $\checkmark$ & $\checkmark$  &   &   &   & $\checkmark$ & $\checkmark$ \\ \hline
Near field & & &  &  & $\checkmark$ & $\checkmark$  & $\checkmark$ &   &  \\ \hline
\end{tabular}}
\end{table*}

\begin{table}[!h]
\caption{Meanings of notations.}
\label{table2}
\centering
{\color{black}
\begin{tabular}{|>{\raggedright\arraybackslash}p{2.9cm}|>{\raggedright\arraybackslash}p{4.7cm}|}
\hline
\textbf{Notations} & \textbf{Meanings} \\ \hline
$\mathbbmss{C}$ &  complex domain \\ \hline
$\jmath$ & imaginary unit with $\jmath^2=-1$ \\ \hline
Boldface lower-case variable ($\mathbf{x}$) & column vector \\ \hline
Boldface upper-case variable ($\mathbf{X}$) & matrix \\ \hline
$\mathbf{X}\geq 0$ &  positive semi-definite matrix \\ \hline
$[\mathbf{X}]_{i,j}$ & element in the $i$-th row and $j$-th column \\ \hline
$[\mathbf{x}]_i$ & $i$-th element of $[\mathbf{x}]_i$ \\ \hline $\|\mathbf{x}\|$ & Euclidean norm \\ \hline
$\mathbf{I}_N$ & $N \times N$ unit matrix \\ \hline
$\mathbf{0}_N$ & $N \times 1$ zeros column vector \\ \hline
$(\cdot)^{\mathsf T}$ & transpose \\ \hline
$(\cdot)^{\mathsf H}$ & conjugate transpose \\ \hline
$(\cdot)^{*}$ & complex conjugate \\ \hline
$\mathbbmss{E}\{\cdot\}$ & expectation operator \\ \hline
$\mathsf{diag}(\cdot)$ & diagonal elements extraction operator \\ \hline
$\mathsf{arg}(\cdot)$ & phase extraction operator \\ \hline
$[a, b]$ & closed interval from $a$ to $b$ \\ \hline
$\lceil \cdot \rceil$ & upward rounding operator \\ \hline
$\lfloor \cdot \rfloor$ & rounding down operator \\ \hline
${\rm{mod}}(\cdot,\cdot)$ & modulus operation \\ \hline
$\odot$ & Hadamard product operation \\ \hline
$\otimes$ & Kronecker product operation \\ \hline
$\Re(\cdot)$ & operation of taking the real part \\ \hline
\end{tabular}}
\end{table}

\begin{figure}[!t]
\centering
\setlength{\abovecaptionskip}{0pt}
\includegraphics[height=0.3\textwidth]{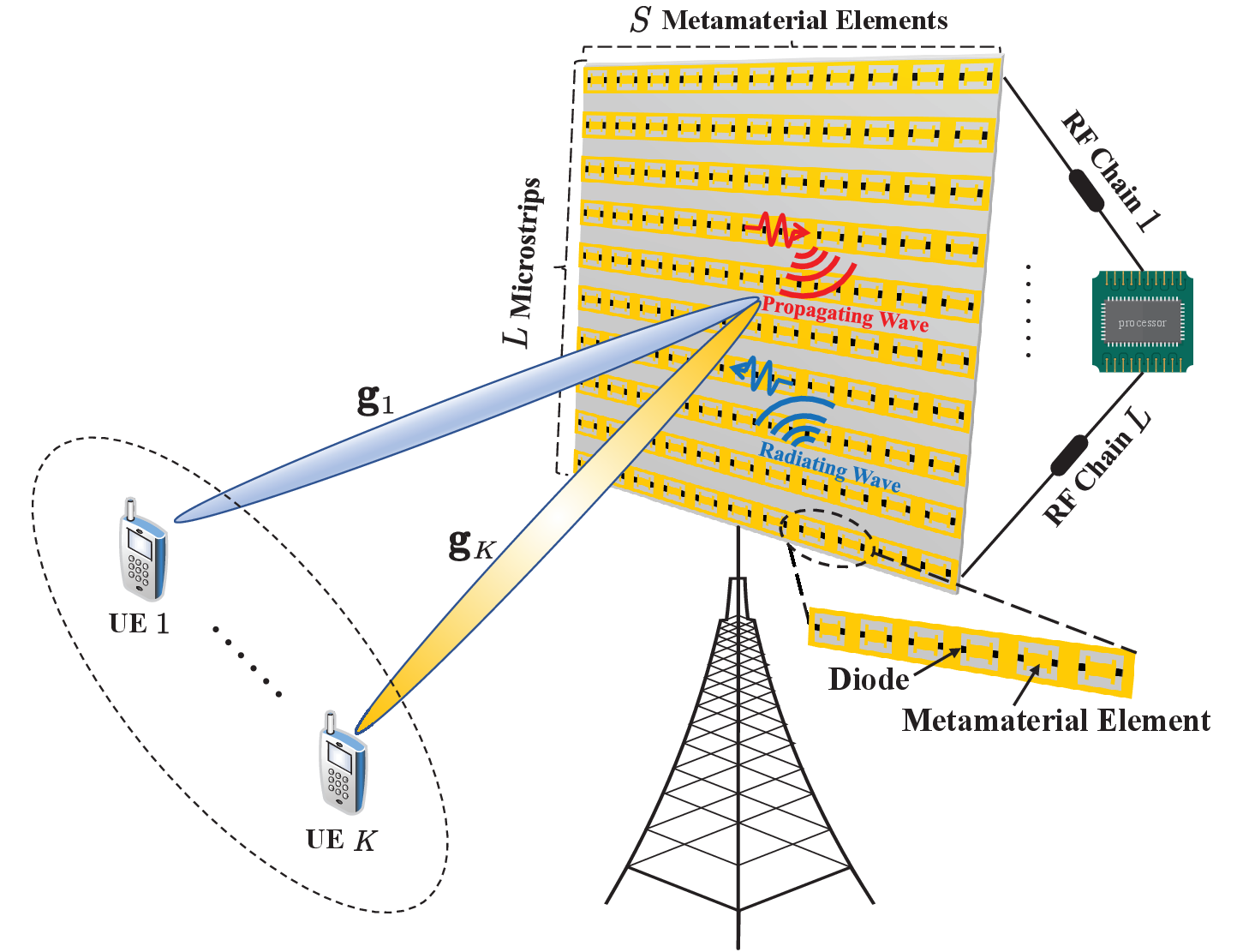}
\caption{Multiuser MISO system enabled with DMAs.}
\label{system_model}
\vspace{-15pt}
\end{figure}

\section{System Model}   \label{Section2}
We consider a DMA-enabled multiuser MISO communication system as shown in {\figurename} {\ref{system_model}}, where a DMA-aided BS communicates with $K$ single-antenna user equipments (UEs). In this section, we first provide an illustration of the DMA architecture. Then, we present the considered channel model, including uplink scenario and downlink scenario.

\subsection{DMA Architecture} \label{DMA_architecture}
As illustrated in {\figurename} {\ref{system_model}}, DMA is a planar array antenna composed of metamaterial elements, which integrate tunable components such as PIN diodes, varactors, and micro electro mechanical systems (MEMS). These metamaterial components are typically placed on waveguides that are connected to digital processors through input/output ports on RF chains. By independently adjusting the state of each tunable component through control circuits, the resonance characteristics of metamaterial elements can be changed, further achieving dynamic control of electromagnetic waves.

This paper assumes that DMA consists of $L$ one-dimensional waveguides based on microstrip, with each microstrip connected to an RF chain. We consider that $S$ metamaterial elements spaced at sub-wavelength distances are placed on each microstrip to transmit signals and receive waveforms. Therefore, the entire planar array contains $N=SL$ metamaterial elements. For uplink transmission, each metamaterial element transmits the observed propagation waveform along the waveguide to the corresponding RF chain, and the baseband signal processor receives a linear combination of these waveforms. The process is reversed for downlink transmission. Specifically, the signal behavior on DMA can be divided into two steps, including the frequency response of metamaterial components and the propagation of signals in microstrip. Below, we will provide a detailed explanation and modeling of these two physical phenomena.
\subsubsection{Frequency Response of Metamaterial Components}
In narrowband systems, prior research has indicated that the components can be assumed to have a flat frequency response, implying that the gain provided by metamaterial components remains consistent over the frequency range of interest. \cite{Shlezinger2021}. Generally, various tunable hardware implementations lead to different phase constraints, resulting in varying performance outcomes. {\color{black}Previous studies suggest that DMAs with a Lorentzian phase constraint, which combines amplitude and phase shift design, typically exhibit superior performance compared to those restricted to amplitude or binary amplitude constraints \cite{You2023,Zhang2024}. Thus, we assume that the DMA phase shifts follow Lorentzian constraint here, which can be implemented using varactor diodes.} Specifically, we denote $\mathbf{Q}\in \mathbbmss{C}^{N\times L}$ as the configurable weight matrix of DMAs, which is formulated as:
\begin{align}  \label{cons_Q}
[\mathbf{Q}]_{\left(l_1-1\right) S+s, l_2}=\left\{\begin{array}{ll}
q_{l_1, s}, & l_1=l_2 \\
0, & l_1 \neq l_2
\end{array},\right.
\end{align}
where $l_1,l_2\in\mathcal{L}=\{1,\dots,L\}$, $s \in \mathcal{S}=\{1,\dots,S\}$, and $q_{l,s}$ denotes the weight of the $s$-th metamaterial element in the $l$-th microstrip. We represent $\mathcal{Q}$ as the possible set of reconfigurable weight, which satisfies the Lorentzian constraint and can be expressed as
\begin{align}\label{cons_Q2}
\{q_{l,s}\}\in \mathcal{Q}=\left\{q=\frac{\jmath+ {\rm{e}}^{\jmath\theta}}{2}:\theta\in[0,2\pi]\right\}.
\end{align}

\subsubsection{Propagation Inside The Microstrip}
When the signal propagates through microstrips, it is mainly influenced by the wave number and element position, resulting in different phase shifts in different metamaterial components. In addition, the signal is also affected by attenuation caused by the characteristics and size of the waveguide material. Considering the above two factors, the propagation effect of signal propagation in waveguides can be modeled as:
\begin{align}\label{h}
h_{l, s}=e^{-\rho_{l, s}\left(\alpha_l+\jmath \gamma_l\right)}, \forall{l}\in \mathcal{L},\forall{s}\in \mathcal{S},
\end{align}
where $\alpha_l$ is the waveguide attenuation coefficient, $\gamma_l$ is the wavenumber, and $\rho_{l,s}$ is the distance between the $s$-th element and the RF port in the $l$-th microstrip.

In summary, DMA can be modeled as a linear process influenced by these two physical phenomena. For baseband-processed input signal $\mathbf{x}\in\mathbbmss{C}^{L \times 1}$, the output signal radiated into wireless environment through the DMA is given by:
\begin{align}\label{output_DMA}
\mathbf{y}_\text{o}=\mathbf{HQx},
\end{align}
where $\mathbf{H}\in \mathbbmss{C}^{N \times N}$ denotes the microstrip propagation coefficients which is a diagonal matrix with $(\mathbf{H})_{(l-1)S+s,(l-1)S+s}=h_{l,s}$ for $l \in \mathcal{L}$ and $s \in \mathcal{S}$.
\subsection{Uplink Channel Model}
We denote $x_k^\text{up}\in \mathbbmss{C}$ as the data symbol transmitted by UE $k$ with zero mean and unit variance, satisfying ${\mathbbmss{E}}\{x_k^\text{up} (x^{\text{up}}_{k^{\prime}})^{*}\}=0$, $\forall k\neq k^{\prime}$. Then the received signal at the BS's DMA elements is given by
{\setlength\abovedisplayskip{5pt}
\setlength\belowdisplayskip{5pt}
\begin{align}\label{y_BS}
\mathbf{y}_{\text{BS}}=\sum_{k=1}^{K}\mathbf{g}_k x_k^\text{up}+\mathbf{n},
\end{align}
}where $\mathbf{g}_k\in \mathbbmss{C}^{N \times 1}$ is the channel from the UE $k$ to the BS and $\mathbf{n}\sim \mathcal{CN}(\mathbf{0}_N,N_0 \mathbf{I}_N)$ is the additive white Gaussian noise (AWGN) with $N_0$ being the noise power. Without losing generality, we use the Rician fading to model the channel $\mathbf{g}_k$, which can be written by:
\begin{align}\label{uplink}
\mathbf{g}_k=\sqrt{\alpha_k\frac{K_0}{1+K_0}}\bar{\mathbf{g}}_k+ \sqrt{\alpha_k\frac{1}{1+K_0}}\tilde{\mathbf{g}}_k,
\end{align}
where $\alpha_k$ denotes the large scale path-loss coefficient between the BS and the $k$-th UE, $K_0$ is the Rician factor, $\bar{\mathbf{g}}_k$ is the line-of-sight (LoS) component of the channel from the UE $k$ to the BS, and $\tilde{\mathbf{g}}_k$ is the non-LoS (NLoS) component, respectively. {\color{black}Specifically, we assume that the LoS component is deterministic and fully characterized by the elevation and azimuth angles. In contrast, the NLoS component is treated as an unknown stochastic vector, modeled using the widely adopted Kronecker model \cite{Ying2014,Bjornson2017}. That is, $\tilde{\mathbf{g}}_k=\mathbf{R}_k^{1/2}\hat{\mathbf{g}}_k$, where $\mathbf{R}_k$ is the deterministic spatial correlation matrix and $\hat{\mathbf{g}}_k$ is a complex Gaussian random vector whose elements are independent and identically distributed (i.i.d.) complex random variables with zero mean and unit variance, i.e., $\hat{\mathbf{g}}_k\sim \mathcal{CN}(\mathbf{0},\mathbf{I}_N)$.} Based on the DMA structure characteristics described in the previous section, the received signal at RF ports in the uplink system can be expressed as:
\begin{align}\label{uplink}
\mathbf{y}_\text{up}=\mathbf{Q}^{\mathsf{H}}\mathbf{H}^{\mathsf{H}}\sum_{k=1}^{K}\mathbf{g}_k x_k^\text{up}+ \mathbf{Q}^{\mathsf{H}}\mathbf{H}^{\mathsf{H}}\mathbf{n} \in \mathbbmss{C}^{L \times 1}.
\end{align}
{\color{black}Specifically, when mutual coupling is taken into account, the received signal model in \eqref{y_BS} becomes $\mathbf{y}_{\text{BS}}=\mathbf{C}\sum_{k=1}^{K}\mathbf{g}_k x_k^\text{up}+\mathbf{n}$, where $\mathbf{C}$ denotes the coupling matrix \cite{Balanis2015}. By treating $\mathbf{C}\mathbf{g}_k$ as the effective channel for user $k$, the proposed analysis and optimization framework remains applicable. Therefore, mutual coupling is omitted in this paper for simplicity.}
\subsection{Downlink Channel Model}
Similar to the uplink case, we first denote $x_k^\text{d} \in \mathbbmss{C}$ as the information symbol transmitted by the BS and intended for the $k$-th UE, which also satisfies ${\mathbbmss{E}}\{x_k^\text{d} (x^{\text{d}}_{k^{\prime}})^{*}\}=0$ for $\forall k\neq k^{\prime}$. Based on the DMA model described in \eqref{output_DMA}, the received signal at UE $k$ is given by:
\begin{align}\label{downlink}
y_k=\mathbf{g}_k^{\mathsf{H}}\mathbf{H}\mathbf{Q}\mathbf{W} \mathbf{x}^\text{d}+ n_k,
\end{align}
where $\mathbf{x}^\text{d}=\{x_1^\text{d},\dots,x_K^\text{d}\}^{\mathsf T} \in {\mathbbmss{C}}^{K\times 1}$ denotes the transmit symbol vector, $\mathbf{W}=\{\mathbf{w}_1,\dots,\mathbf{w}_K\} \in {\mathbbmss{C}}^{L\times K}$ denotes the digital precoding matrix with ${\mathbf w}_k\in {\mathbbmss{C}}^{L\times 1}$ denotes the digital precoding vector for the UE $k$, and $n_k \sim \mathcal{CN}(0,N_k)$ is the AWGN at the receiver of UE $k$ with $N_k$ being the noise power.
\section{Uplink transmission} \label{Section3}
We start with uplink transmission and consider two classical decoding rules, i.e., MMSE-SIC and MMSE-nSIC. The MMSE-SIC decoding can reach the uplink sum rate capacity while the MMSE-nSIC decoding yields lower computational complexity. For both cases, we present the analytical ergodic sum rate expressions, establish SE optimization problem, and provide efficient algorithms to optimize the configurable weight matrix of the DMA.
\vspace{-10pt}
\subsection{MMSE-SIC} \label{Section_A}
\subsubsection{SE Analysis}
For the multiuser MISO system with MMSE-SIC decoding, the ergodic sum rate exploiting only statistical CSI can be expressed by \cite{Soysal2009}
{\color{black}
\begin{align}\label{R_SIC}
\mathcal{R}_\text{sic}=\mathbbmss{E}_{\mathbf{g}_k}\left\{\log\det\left( \mathbf{I}_L+\mathbf{Q}^{\mathsf H} \mathbf{H}^{\mathsf H} \sum_{k=1}^{K}\mathbf{g}_k \mathbf{g}_k^{\mathsf H} \mathbf{H}\mathbf{Q}(\mathbf{P})^{-1} \right)\right\},
\end{align}
}where $\mathbf{P}=N_0\mathbf{Q}^{\mathsf H}\mathbf{H}^{\mathsf H}\mathbf{H}\mathbf{Q}$ is the noise covariance matrix. The above equation with mathematical expectations is not a closed form expression, making it difficult to calculate. Therefore, to simplify the analysis and optimization, we provide the following theorem.
\begin{theorem} \label{Theorem_1}
The ergodic uplink sum rate with MMSE-SIC decoding is upper bounded by:
\begin{align}\label{R_SIC_upperbound}
\overline{\mathcal{R}}_\text{sic}=\log\det\left( \mathbf{I}_L+\mathbf{Q}^{\mathsf H} \mathbf{H}^{\mathsf H} \mathbf{G} \mathbf{G}^{\mathsf H} \mathbf{H}\mathbf{Q}(\mathbf{P})^{-1} \right),
\end{align}
where $\mathbf{G}=\left[\tilde{\mathbf{G}}_1,\dots,\tilde{\mathbf{G}}_K \right] \in \mathbbmss{C}^{N \times K(N+1)}$ with $\tilde{\mathbf{G}}_k=\left[\sqrt{\frac{\alpha_k K_0}{1+K_0}}\bar{\mathbf{g}}_k,\sqrt{\frac{\alpha_k }{1+K_0}}\mathbf{R}_k^{1/2}\right]\in \mathbbmss{C}^{N \times (N+1)}$.
\end{theorem}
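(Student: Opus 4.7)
The plan is to upper-bound $\mathcal{R}_\text{sic}$ by invoking Jensen's inequality for the concave $\log\det$ function and then evaluating the resulting second-order statistics of the Rician channels in closed form. As a preliminary step I would isolate the randomness: let $\mathbf{A}(\mathbf{g})=\mathbf{Q}^{\mathsf H}\mathbf{H}^{\mathsf H}\bigl(\sum_{k=1}^{K}\mathbf{g}_k\mathbf{g}_k^{\mathsf H}\bigr)\mathbf{H}\mathbf{Q}$, which is the only random part in \eqref{R_SIC}. The identity $\log\det(\mathbf{I}_L+\mathbf{A}\mathbf{P}^{-1})=\log\det(\mathbf{P}+\mathbf{A})-\log\det(\mathbf{P})$, valid because $\mathbf{P}$ is positive definite, recasts the rate as
\begin{equation*}
\mathcal{R}_\text{sic}=\mathbbmss{E}_{\mathbf{g}_k}\{\log\det(\mathbf{P}+\mathbf{A}(\mathbf{g}))\}-\log\det(\mathbf{P}),
\end{equation*}
i.e.\ a deterministic constant plus the expectation of a $\log\det$ of a positive-definite matrix. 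This reformulation is useful because it converts the non-symmetric argument $\mathbf{I}_L+\mathbf{A}\mathbf{P}^{-1}$ inside \eqref{R_SIC} into a symmetric one to which matrix concavity arguments apply directly.

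Next I would apply Jensen's inequality. Since $\log\det(\cdot)$ is concave on the positive-definite cone, one has $\mathbbmss{E}\{\log\det(\mathbf{P}+\mathbf{A}(\mathbf{g}))\}\leq\log\det(\mathbf{P}+\mathbbmss{E}\{\mathbf{A}(\mathbf{g})\})$. Subtracting $\log\det(\mathbf{P})$ and undoing the determinant identity yields
\begin{equation*}
\mathcal{R}_\text{sic}\leq\log\det\bigl(\mathbf{I}_L+\mathbbmss{E}\{\mathbf{A}(\mathbf{g})\}\,\mathbf{P}^{-1}\bigr),
\end{equation*}
so the claim reduces to showing that $\mathbbmss{E}\{\mathbf{A}(\mathbf{g})\}=\mathbf{Q}^{\mathsf H}\mathbf{H}^{\mathsf H}\mathbf{G}\mathbf{G}^{\mathsf H}\mathbf{H}\mathbf{Q}$.

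The final step is the moment computation. Since $\mathbf{Q}$ and $\mathbf{H}$ are deterministic, it suffices to evaluate $\sum_{k=1}^{K}\mathbbmss{E}\{\mathbf{g}_k\mathbf{g}_k^{\mathsf H}\}$. Inserting the Rician decomposition $\mathbf{g}_k=\sqrt{\alpha_k K_0/(1+K_0)}\,\bar{\mathbf{g}}_k+\sqrt{\alpha_k/(1+K_0)}\,\mathbf{R}_k^{1/2}\hat{\mathbf{g}}_k$ and using $\mathbbmss{E}\{\hat{\mathbf{g}}_k\}=\mathbf{0}$ together with $\mathbbmss{E}\{\hat{\mathbf{g}}_k\hat{\mathbf{g}}_k^{\mathsf H}\}=\mathbf{I}_N$ causes the cross terms to vanish, giving
\begin{equation*}
\mathbbmss{E}\{\mathbf{g}_k\mathbf{g}_k^{\mathsf H}\}=\frac{\alpha_k K_0}{1+K_0}\bar{\mathbf{g}}_k\bar{\mathbf{g}}_k^{\mathsf H}+\frac{\alpha_k}{1+K_0}\mathbf{R}_k=\tilde{\mathbf{G}}_k\tilde{\mathbf{G}}_k^{\mathsf H},
\end{equation*}
where the second equality is the block factorization of $\tilde{\mathbf{G}}_k$ stated in the theorem. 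Summing over $k$ and recognizing the horizontal stacking that defines $\mathbf{G}$ produces $\sum_{k=1}^{K}\tilde{\mathbf{G}}_k\tilde{\mathbf{G}}_k^{\mathsf H}=\mathbf{G}\mathbf{G}^{\mathsf H}$, which substituted back delivers exactly \eqref{R_SIC_upperbound}.

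The only non-routine part of the argument is checking the hypotheses of Jensen's inequality, namely that $\mathbf{P}+\mathbf{A}(\mathbf{g})$ is almost surely positive definite so that the matrix concavity of $\log\det$ applies. This follows once $\mathbf{P}=N_0\mathbf{Q}^{\mathsf H}\mathbf{H}^{\mathsf H}\mathbf{H}\mathbf{Q}$ is itself positive definite, which is guaranteed because $\mathbf{H}$ is diagonal with nonzero entries by \eqref{h} and $\mathbf{Q}$ has exactly one nonzero per column by \eqref{cons_Q}--\eqref{cons_Q2}, so that $\mathbf{H}\mathbf{Q}$ has full column rank. Beyond this verification, the remainder is linearity of expectation and standard second-moment algebra.
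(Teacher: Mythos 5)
Your proposal is correct and follows essentially the same route as the paper's proof: Jensen's inequality applied to the concave $\log\det$, followed by the second-moment computation $\mathbbmss{E}\{\mathbf{g}_k\mathbf{g}_k^{\mathsf H}\}=\tilde{\mathbf{G}}_k\tilde{\mathbf{G}}_k^{\mathsf H}$ and the stacking identity $\mathbf{G}\mathbf{G}^{\mathsf H}=\sum_k\tilde{\mathbf{G}}_k\tilde{\mathbf{G}}_k^{\mathsf H}$. Your additional care in symmetrizing the argument via $\log\det(\mathbf{P}+\mathbf{A})-\log\det(\mathbf{P})$ and verifying that $\mathbf{H}\mathbf{Q}$ has full column rank is a welcome refinement the paper omits, but it does not change the substance of the argument.
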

\begin{IEEEproof}
For more details, please refer to Appendix \ref{Proof_A}.
\end{IEEEproof}
\begin{remark} \label{remark_1}
Theorem \ref{Theorem_1} provides a closed-form upper bound expression for $\mathcal{R}_\text{sic}$, which is more tractable to analyze and guide the system design. It is observed that the ergodic sum rate is depend on the DMA weight matrix $\mathbf{Q}$ with microstrip propagation coefficients $\mathbf{H}$ and the statistical CSI fixed. Thus, we could maximize $\overline{\mathcal{R}}_\text{sic}$ as a surrogate objective by optimizing $\mathbf{Q}$ to improve the ergodic sum rate.
\end{remark}
\subsubsection{Problem Formulation and Algorithm Design}
Base on Remark \ref{remark_1}, we aim to maximize $\overline{\mathcal R}_\text{sic}$ by optimizing DMA weight matrix $\mathbf{Q}$, which can be cast as
\begin{subequations}\label{P_sic}
\begin{align}
{\mathcal{P}}_{\text{sic}}:~\max_\mathbf{Q}&~\overline{\mathcal R}_\text{sic}\label{P_X_Obj}\\
{\text{s.t.}}&~\eqref{cons_Q},\eqref{cons_Q2}.
\end{align}
\end{subequations}
The problem $\mathcal{P}_\text{sic}$ is difficult to solve due to the non-convex objective function corresponding $\mathbf{Q}$ and non-convex constraint of $\mathbf{Q}$ in \eqref{cons_Q},\eqref{cons_Q2}. Therefore, we first resort to transform the original problem to a more tractable form. Specifically, we use the classic WMMSE method \cite{Shi2011} by treating $\overline{\mathcal{R}}_\text{sic}$ as the capacity of a equivalent MIMO channel $\mathbf{y}_\text{e}=\mathbf{Q}^{\mathsf H}\mathbf{H}^{\mathsf H} \left(\mathbf{G}\mathbf{s}_\text{e} +\mathbf{z}_\text{e}\right)$, where $\mathbf{s}_\text{e} \sim \mathcal{CN}(\mathbf{0}_{K(N+1)},\mathbf{I}_{K(N+1)})$ and $\mathbf{z}_\text{e} \sim \mathcal{CN}(\mathbf{0}_N,N_0 \mathbf{I}_N)$ denote the transmit signal and AWGN, respectively. By denoting $\mathbf{U}_\text{e}\in \mathbbmss{C}^{L \times K(N+1)}$ as the linear receive matrix, the mean square error matrix is given by
\begin{align}
\mathbf{E}_\text{e}=\mathbbmss{E}_{\mathbf{s}_\text{e}, \mathbf{n}_\text{e}}\left\{\left(\mathbf{U}_\text{e}^{\mathsf H} \mathbf{y}_\text{e}-\mathbf{s}_\text{e}\right) \left(\mathbf{U}_\text{e}^{\mathsf H} \mathbf{y}_\text{e}-\mathbf{s}_\text{e}\right)^{\mathsf H}\right\}.
\end{align}
Then, $\mathcal{P}_\text{sic}$ can be equivalently transformed to the following problem \cite{Shi2011}:
\begin{subequations}\label{P1}
\begin{align}
{\mathcal P}_1: ~\min_{\mathbf{Q},\mathbf{U}_\text{e},\mathbf{W}_\text{e}} &~ \mathsf{tr} \left(\mathbf{W}_\text{e}\mathbf{E}_\text{e}\right)- \log\det\left(\mathbf{W}_\text{e}\right) \label{P1_obj}\\
{\rm{s.t.}}&~ \eqref{cons_Q},\eqref{cons_Q2},
\end{align}
\end{subequations}
where $\mathbf{W}_\text{e} \in \mathbbmss{C}^{K(N+1) \times K(N+1)}$ is a semidefinite auxiliary matrix. We note that the objective function of the WMMSE problem $\mathcal{P}_1$ is convex over each optimization variable while holding others fixed, which means that $\mathcal{P}_1$ is more tractable than the problem $\mathcal{P}_\text{sic}$. On this basis, we could alternately optimize $\mathbf{W}_\text{e}$, $\mathbf{U}_\text{e}$, and $\mathbf{Q}$ by keeping other variables fixed.

\textbf{Step 1}, we optimize the auxiliary variable $\mathbf{W}_\text{e}$ by checking its first-order optimality condition. Specifically, by fixing other variables, the optimal $\mathbf{W}_\text{e}$ is given by
\begin{align}\label{W_e}
\mathbf{W}_\text{e}^{\star}=\mathbf{E}_\text{e}^{-1}.
\end{align}

\textbf{Step 2}, we optimize the receiving matrix $\mathbf{U}_\text{e}$ by keeping other variables fixed. Following the method in Step 1, we obtain the optimal $\mathbf{U}_\text{e}$ as follows
\begin{align}\label{U_e}
\mathbf{U}_\text{e}^\star=\left(\mathbf{Q}^{\mathsf H} \mathbf{H}^{\mathsf H}\mathbf{G}\mathbf{G}^{\mathsf H} \mathbf{H}\mathbf{Q}+\mathbf{P}\right)^{-1} \mathbf{Q}^{\mathsf H} \mathbf{H}^{\mathsf H}\mathbf{G}.
\end{align}

\textbf{Step 3}, we optimize the DMA weight matrix $\mathbf{Q}$ with $\mathbf{W}_\text{e}$ and $\mathbf{U}_\text{e}$ fixed. Let $\mathbf{h}=\mathsf{diag}(\mathbf{H})$ and $\mathbf{q}=\{q_1,\dots,q_N\}\in \mathbbmss{C}^{N \times 1}$ denote the DMA weight vector with $q_i=[\mathbf{Q}]_{i,\lceil \frac{i}{S}\rceil}$ for $i \in \mathcal{N}=\{1,\dots,N\}$, then we rewrite $\mathbf{HQ}$ as $\tilde{\mathbf{Q}}\tilde{\mathbf{H}}$, where $\tilde{\mathbf{Q}}\in \mathbbmss{C}^{N \times N}$ is a diagonal matrix with $[\tilde{\mathbf{Q}}]_{i,i}=q_i$ and $\tilde{\mathbf{H}}\in \mathbbmss{C}^{N \times L}$ is a block diagonal matrix which is given by
\begin{align}\label{tilde_H}
[\tilde{\mathbf{H}}]_{n, l}=\left\{\begin{array}{ll}
[\mathbf{h}]_n, & l=\lceil\frac{n}{S}\rceil \\
0, & \text{otherwise}
\end{array},\right.
\end{align}
for $ n\in\mathcal{N}$ and $\forall l \in \mathcal{L}$. Ignoring irrelevant terms, we obtain an optimization problem as follows
{\setlength\abovedisplayskip{2pt}
\setlength\belowdisplayskip{2pt}
\begin{subequations}\label{P2}
\begin{align}
{\mathcal P}_{\tilde{\mathbf{Q}}}: ~\min_{\tilde{\mathbf{Q}}} &~ \mathsf{tr} \big({\tilde{\mathbf{Q}}}^{\mathsf H} \mathbf{A}_0 {\tilde{\mathbf{Q}}} \mathbf{B}_0-{\tilde{\mathbf{Q}}} \mathbf{C}_0-\mathbf{C}_0^{\mathsf H}{\tilde{\mathbf{Q}}}^{\mathsf H} \big) \\
{\rm{s.t.}}&~ [{\tilde{\mathbf{Q}}}]_{n,n} \in \mathcal{Q}, \forall n \in \mathcal{N},
\end{align}
\end{subequations}
}where $\mathbf{A}_0=\mathbf{G}\mathbf{G}^{\mathsf H}+N_0 \mathbf{I}_N$, $\mathbf{B}_0={\tilde{\mathbf{H}}}\mathbf{U}_\text{e} \mathbf{W}_\text{e}\mathbf{U}_\text{e}^{\mathsf H} {\tilde{\mathbf{H}}}^{\mathsf H}$, and $\mathbf{C}_0={\tilde{\mathbf{H}}}\mathbf{U}_\text{e} \mathbf{W}_\text{e}\mathbf{G}^{\mathsf H}$. By denoting $\mathbf{D}_0=\mathbf{A}_0 \odot \mathbf{B}_0^{\mathsf T}$ and $\mathbf{c}_0=\mathsf{diag}(\mathbf{C}_0^{\mathsf H})$, $\mathcal{P}_{\tilde{\mathbf{Q}}}$ can be further transformed to the following problem regarding the DMA weights vector $\mathbf{q}$:
\begin{subequations}\label{Pq_sic}
\begin{align}
{\mathcal P}_{\mathbf{q}}^{\text{sic}}: ~\min_{\mathbf{q}} &~ f_{\mathbf{q}}(\mathbf{q})=\mathbf{q}^{\mathsf H} \mathbf{D}_0 \mathbf{q} - 2\Re\left(\mathbf{q}^{\mathsf H}\mathbf{c}_0\right) \label{Pq_sic_obj}\\
{\rm{s.t.}}&~ q_n \in \mathcal{Q}, \forall n \in \mathcal{N}. \label{Pq_sic_cons}
\end{align}
\end{subequations}
Due to the constraints acting on each element, we propose an element-wise refinement (EWR) algorithm to alternately optimize the elements of $\mathbf{q}$. Specifically, the objective function in \eqref{Pq_sic_obj} can be rewritten as
\begin{align}
f_{\mathbf{q}}(\mathbf{q})=\sum_{i=1}^{N}\sum_{j=1}^{N}q_i^{*} [\mathbf{D}_0]_{i,j}q_j-2\Re\Big(\sum_{i=1}^{N}q_i^{*} [\mathbf{c}_0]_i \Big).
\end{align}
By fixing $\{q_i\}_{i\neq n}$ and omitting the terms independent with $q_n$, we obtain the subproblem as follows:
\begin{subequations}\label{P_n}
\begin{align}
{\mathcal P}_{q_n}: ~\min_{q_n} &~ f_n(q_n) = 2\Re\Big(\sum_{m\neq n}^{N} q_n^{*} [\mathbf{D_0}]_{n,m}q_m \Big) \label{Pn_obj}\\
 &~ -2\Re\left(q_n^{*}[\mathbf{c}_0]_n\right) + q_n^{*}q_n[\mathbf{D_0}]_{n,n} \nonumber \\
{\rm{s.t.}}&~  q_n \in \mathcal{Q}, \forall n \in \mathcal{N}.
\end{align}
\end{subequations}
This problem is intractable due to the non-convex constraint. However, we observe that the problem of optimizing $q_n$ is actually optimizing the phase $\theta_n$ from the Lorentzian constraint $q_n=\frac{\jmath+ {\rm{e}}^{\jmath\theta_n}}{2}$. Thus, by instituting $q_n=\frac{\jmath+ {\rm{e}}^{\jmath\theta_n}}{2}$ to $f_n(q_n)$ in \eqref{Pn_obj} and ignoring constant terms, $\mathcal{P}_{q_n}$ can be transformed to the following problem:
\begin{subequations}
\begin{align}
{\mathcal P}_{\theta_n}: ~\min_{\theta_n} &~ g_n(\theta_n) = \Re\left( \rm{e}^{-\jmath \theta_n} (-\eta_n)  \right)\\
{\rm{s.t.}}&~ \theta_n \in [0, 2\pi],
\end{align}
\end{subequations}
where $\eta_n=\left([\mathbf{c}_0]_n- \sum_{m\neq n}^{N} [\mathbf{D_0}]_{n,m}q_m- \frac{\jmath}{2}[\mathbf{D_0}]_{n,n} \right)$. Therefore, the optimal $\theta_n$ is given by
\begin{align} \label{qn_3}
\theta_n^{\star} = \argmin\Re\left( \rm{e}^{-\jmath \theta_n} (-\eta_n) \right)=\angle \eta_n,
\end{align}
which yields the optimal $q_n^{\star}=\frac{\jmath+ {\rm{e}}^{\jmath\theta_n^{\star}}}{2}$.

\begin{algorithm}[!t]
    \caption{EWR-based method}
    \label{Algorithm1}
    Initialize $\mathbf{q}^{(t)}=[q_1^{(t)}, \dots, q_N^{(t)}]$ and  iteration index $t=0$\;
    \Repeat{convergence}
    {
    \For{$n=1:N$}
    {Optimize the phase shift $q_n^{(t)}$ based on \eqref{qn_3}\;
    }
    Set $t=t+1$\;
    }
    Calculate the DMA weight matrix $\mathbf{Q}$ based on  $\mathbf{q}^{(t)}$.
\end{algorithm}

\begin{algorithm}[!t]
    \caption{WMMSE-based method}
    \label{Algorithm2}
    Initialize $\left\{\mathbf{W}_\text{e}^{(j)}, \mathbf{U}_\text{e}^{(j)}, \mathbf{Q}^{(j)}\right\}$ and  iteration index $j=0$\;
    \Repeat{convergence}
    {
    Optimize $\mathbf{W}_\text{e}^{(j+1)}$ based on \eqref{W_e}\;
    Optimize $\mathbf{U}_\text{e}^{(j+1)}$ based on \eqref{U_e}\;
    Optimize $\mathbf{Q}^{(j+1)}$ based on the Algorithm \ref{Algorithm1}\;
    Set $j=j+1$\;
    }
\end{algorithm}

The proposed EWR-based method is summarized in Algorithm \ref{Algorithm1}. It is obvious that the objective function in \eqref{Pq_sic_obj} monotonically decreases, which ensures that Algorithm \ref{Algorithm1} converges to a local optimal solution of ${\mathcal P} _{\mathbf{q}} ^{\text{sic}}$. The
computational complexity of Algorithm \ref{Algorithm1} can be estimated as $\mathcal{O}(I_\text{EWR}^\text{sic} N^2)$, where $I_\text{EWR}^\text{sic}$ represents the number of iterations for the case of MMSE-SIC decoding.

\subsubsection{Convergence and Complexity Analysis}
By alternately optimizing $\{\mathbf{W}_{\text{e}}, \mathbf{U}_{\text{e}}, \mathbf{Q}\}$ until the objective function in \eqref{P1_obj} converges, we obtain a local optimal solution of $\mathcal{P}_\text{sic}$. The proposed WMMSE-based method is summarized in Algorithm \ref{Algorithm2}. As mentioned before, each step of the WMMSE-based approach, i.e., the optimization of $\{\mathbf{W}_{\text{e}}, \mathbf{U}_{\text{e}}, \mathbf{Q}\}$, will not increase the objective function of $\mathcal{P}_1$, which guarantees the convergence of the Algorithm \ref{Algorithm2}. The computational complexity can be estimated by $I_{\text{W}}^\text{sic}\left(3K^3(N+1)^3+ I_{\text{EWR}}^\text{sic}N^2\right)$, where $I_{\text{W}}^\text{sic}$ is the iteration number of the WMMSE-based method.
\vspace{-15pt}
\subsection{MMSE-nSIC}
\subsubsection{SE Analysis}
The ergodic sum rate of the considered uplink system with MMSE-nSIC decoding is given by
{\color{black}
\begin{align}\label{MMSE-nSIC}
\mathcal{R}_{\text{nsic}}&=\mathbbmss{E}_{\mathbf{g}_k} \Bigg\{\sum_{k=1}^{K} \log\det\Big(\mathbf{I}_L +\mathbf{Q}^{\mathsf H} \mathbf{H}^{\mathsf H}\mathbf{g}_k\mathbf{g}_k^{\mathsf H} \mathbf{HQ} \nonumber \\
& \big(\sum_{i\neq k}^{K}\mathbf{Q}^{\mathsf H} \mathbf{H}^{\mathsf H}\mathbf{g}_i\mathbf{g}_i^{\mathsf H} \mathbf{HQ}+\mathbf{P}\big)^{-1} \Big)\Bigg\}.
\end{align}}
Similar to the process in section \ref{Section_A}, we propose Theorem \ref{Theorem_2} to pursue an approximation.
\begin{theorem} \label{Theorem_2}
The ergodic sum rate with MMSE-nSIC decoding can be approximated by
\begin{align}\label{MMSE_nSIC_approximate}
\tilde{\mathcal{R}}_{\text{nsic}}& = \sum_{k=1}^{K} \log\det\Big(\mathbf{I}_L+\mathbf{Q}^{\mathsf H} \mathbf{H}^{\mathsf H}\tilde{\mathbf{G}}_k \tilde{\mathbf{G}}_k^{\mathsf H} \mathbf{HQ}  \nonumber \\
& \big(\sum_{i\neq k}^{K}\mathbf{Q}^{\mathsf H} \mathbf{H}^{\mathsf H}\tilde{\mathbf{G}}_i \tilde{\mathbf{G}}_i^{\mathsf H} \mathbf{HQ}+ \mathbf{P}\big)^{-1} \Big).
\end{align}
\end{theorem}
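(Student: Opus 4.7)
The plan is to mirror the strategy used for Theorem \ref{Theorem_1}, adapted to the fact that the interference term in the MMSE-nSIC rate is itself random. The key moment identity I would invoke is
\begin{align*}
\mathbbmss{E}\{\mathbf{g}_k\mathbf{g}_k^{\mathsf H}\}=\tilde{\mathbf{G}}_k\tilde{\mathbf{G}}_k^{\mathsf H},
\end{align*}
which follows directly from the Rician model together with the Kronecker assumption $\tilde{\mathbf{g}}_k=\mathbf{R}_k^{1/2}\hat{\mathbf{g}}_k$ and $\mathbbmss{E}\{\hat{\mathbf{g}}_k\hat{\mathbf{g}}_k^{\mathsf H}\}=\mathbf{I}_N$, noting that the deterministic LoS and zero-mean NLoS components are uncorrelated. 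This is exactly the substitution that, if performed inside every $\log\det$, produces the claimed surrogate.

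Next, for each summand of $\mathcal{R}_{\text{nsic}}$ I would split
\[
\log\det(\mathbf{I}_L+\mathbf{A}_k\mathbf{B}_k^{-1})=\log\det(\mathbf{A}_k+\mathbf{B}_k)-\log\det(\mathbf{B}_k),
\]
with $\mathbf{A}_k=\mathbf{Q}^{\mathsf H}\mathbf{H}^{\mathsf H}\mathbf{g}_k\mathbf{g}_k^{\mathsf H}\mathbf{H}\mathbf{Q}$ and $\mathbf{B}_k=\sum_{i\neq k}\mathbf{Q}^{\mathsf H}\mathbf{H}^{\mathsf H}\mathbf{g}_i\mathbf{g}_i^{\mathsf H}\mathbf{H}\mathbf{Q}+\mathbf{P}$. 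Jensen's inequality (concavity of $\log\det$) then suggests replacing the random matrices by their expectations in both pieces: it delivers an upper bound on $\mathbbmss{E}\{\log\det(\mathbf{A}_k+\mathbf{B}_k)\}$ and a matching lower bound on $\mathbbmss{E}\{-\log\det(\mathbf{B}_k)\}$. Substituting the moment identity above, the two deterministic pieces recombine into $\log\det(\mathbf{I}_L+\mathbbmss{E}\{\mathbf{A}_k\}\mathbbmss{E}\{\mathbf{B}_k\}^{-1})$, which matches the $k$-th summand on the right-hand side of the theorem.

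The main obstacle is precisely that the two Jensen steps point in opposite directions, so the resulting expression is not a rigorous bound on $\mathcal{R}_{\text{nsic}}$ but only an approximation, consistent with the wording of the theorem. I would address this along the standard deterministic-equivalent rationale: the two gaps are expected to be comparable in magnitude and largely cancel, particularly in the regimes relevant to DMA arrays (moderate to large $N$, well-conditioned $\mathbf{R}_k$), mirroring the familiar practice of replacing $\mathbf{g}_i\mathbf{g}_i^{\mathsf H}$ by its covariance in both signal and interference terms of per-user SINR expressions. Quantitative tightness would then be validated numerically in Section V. With this in hand, $\tilde{\mathcal{R}}_{\text{nsic}}$ plays the same role in the subsequent development as $\overline{\mathcal{R}}_{\text{sic}}$ plays in Section \ref{Section_A}: a closed-form, $\mathbf{Q}$-dependent surrogate that feeds directly into a WMMSE-based optimization pipeline.
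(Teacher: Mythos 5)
Your proposal matches the paper's own proof essentially step for step: the same decomposition of each summand into $\log\det(\mathbf{A}_k+\mathbf{B}_k)-\log\det(\mathbf{B}_k)$, the same second-moment identity $\mathbbmss{E}\{\mathbf{g}_k\mathbf{g}_k^{\mathsf H}\}=\tilde{\mathbf{G}}_k\tilde{\mathbf{G}}_k^{\mathsf H}$, and the same application of Jensen's inequality to both terms, with the (correct) observation that the two bounds point in opposite directions so the result is only an approximation. Your added remark on why the two Jensen gaps tend to cancel is a useful clarification the paper leaves implicit, but the argument is the same.
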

\begin{IEEEproof}
For more details, please refer to Appendix \ref{Proof_B}.
\end{IEEEproof}
\begin{remark} \label{remark_2}
Theorem \ref{Theorem_2} provides a closed-form approximation of the ergodic sum rate with MMSE-nSIC decoding. Similar to Remark \ref{remark_1}, we could optimize $\mathbf{Q}$ to maximize the approximate expression $\tilde{\mathcal{R}}_\text{sic}$ and thus improve the ergodic sum rate.
\end{remark}
\subsubsection{Problem Formulation and Algorithm Design}
According to Remark \ref{remark_2}, we can establish the following problem
\begin{subequations}\label{P_nsic}
\begin{align}
{\mathcal{P}}_{\text{nsic}}:~\max_\mathbf{Q}&~\tilde{\mathcal R}_\text{nsic} \\
{\text{s.t.}}&~ \eqref{cons_Q},\eqref{cons_Q2}.
\end{align}
\end{subequations}
It is noted that $\tilde{\mathcal{R}}_\text{nsic}$ can be considered as the channel capacity of a downlink MU-MIMO system where a BS equipped with $(N+1)$ antennas transmits signals to $K$ users enabled with DMAs. In this equivalent system, the signal received at the $k$-th user is given by
\begin{align}
\mathbf{y}_k= \mathbf{Q}^{\mathsf H} \mathbf{H}^{\mathsf H} \left(\sum_{i=1}^{K} \tilde{\mathbf{G}}_i \mathbf{s}_i+\mathbf{z}_k \right),
\end{align}
where $\mathbf{s}_i \sim \mathcal{CN} (\mathbf{0}_{N+1},\mathbf{I}_{N+1})$ is the information symbol for user $i$ and $\mathbf{z}_k\sim \mathcal{CN}(\mathbf{0}_N, N_0 \mathbf{I}_N)$ is the noise for user $k$. Similar to the case of MMSE-SIC decoding, the equivalent WMMSE problem can be expressed as
\begin{subequations}\label{P2}
\begin{align}
{\mathcal P}_2: ~\min_{\mathbf{Q},\mathbf{U}_k,\mathbf{W}_k} &~ \sum_{k=1}^{K}\left[\mathsf{tr} \left(\mathbf{W}_k\mathbf{E}_k\right)- \log\det\left(\mathbf{W}_k\right)\right] \label{P2_obj}\\
{\rm{s.t.}}&~ \eqref{cons_Q},\eqref{cons_Q2},
\end{align}
\end{subequations}
where $\mathbf{W}_k \in \mathbbmss{C}^{(N+1) \times (N+1)} \succeq 0$ is an auxiliary matrix, $\mathbf{E}_k=\mathbbmss{E}_{\mathbf{s}_k, \mathbf{z}_k}\left\{\left(\mathbf{U}_k^{\mathsf H} \mathbf{y}_k-\mathbf{s}_k\right) \left(\mathbf{U}_k^{\mathsf H} \mathbf{y}_k-\mathbf{s}_k\right)^{\mathsf H}\right\}$ with $\mathbf{U}_k\in \mathbbmss{C}^{L \times (N+1)}$ being the linear receive matrix corresponding to user $k$. Following the process described in section \ref{Section_A}, we optimize ${\mathbf{Q},\{\mathbf{U}_k\},\{\mathbf{W}_k}\}$ alternately until the objective function in \eqref{P2_obj} converges.

\textbf{Step 1}, we optimize $\mathbf{W}_k$ by checking its first-order optimality condition, which yields the optimal $\mathbf{W}_k$ as $\mathbf{W}_k^{\star}=\mathbf{E}_k^{-1}$.

\textbf{Step 2}, we optimize $\mathbf{U}_k$ with other variables fixed. Similar to Step 1, the optimal solution of $\mathbf{U}_k$ is given by $\mathbf{U}_k^{\star}=\left(\mathbf{Q}^{\mathsf H} \mathbf{H}^{\mathsf H}\mathbf{G}\mathbf{G}^{\mathsf H} \mathbf{H}\mathbf{Q}+ \mathbf{P}\right)^{-1} \mathbf{Q}^{\mathsf H} \mathbf{H}^{\mathsf H} \tilde{\mathbf{G}}_k$.

\textbf{Step 3}, we optimize $\mathbf{Q}$ with $\{\mathbf{W}_k\}$ and $\{\mathbf{U}_k\}$ fixed. By ignoring irrelevant terms and after some basic mathematical manipulations, the problem with respect to $\mathbf{q}$ is given by
\begin{subequations}\label{Pq_nsic}
\begin{align}
{\mathcal P}_{\mathbf{q}}^{\text{nsic}}: ~\min_{\mathbf{q}} &~ \mathbf{q}^{\mathsf H} \mathbf{D}_1 \mathbf{q} - 2\Re(\mathbf{q}^{\mathsf H}\mathbf{c}_1) \label{Pq_obj}\\
{\rm{s.t.}}&~ q_n \in \mathcal{Q}, \forall n \in \mathcal{N}, \label{Pq_cons}
\end{align}
\end{subequations}
where $\mathbf{D}_1=\mathbf{A}_0 \odot \mathbf{B}_1$ with $\mathbf{B}_1 = \sum_{k=1}^{K}\tilde{\mathbf{H}}\mathbf{U}_k\mathbf{W}_k \mathbf{U}_k^{\mathsf H}\tilde{\mathbf{H}}^{\mathsf H}$, $\mathbf{c}_1=\mathsf{diag}(\mathbf{C}_1)$ with $\mathbf{C}_1=\sum_{k=1}^{K}\tilde{\mathbf{H}} \mathbf{U}_k\mathbf{W}_k \mathbf{G}_k^{\mathsf H}$. It is observed that the problem ${\mathcal P}_{\mathbf{q}}^{\text{nsic}}$ has the same form as problem ${\mathcal P}_{\mathbf{q}}^{\text{sic}}$, which can be solved by the proposed EWR-based method.
\subsubsection{Convergence and Complexity Analysis}
The proposed WMMSE-based method is similar to the \ref{Section_A}, thus can ensure convergence to a local optimal solution. Moreover, the computational complexity scales with $I_{\text{W}}^\text{nsic}\left(3K^3(N+1)^3+ I_{\text{EWR}}^\text{nsic}N^2\right)$, where $I_{\text{W}}^\text{nsic}$ and $I_{\text{EWR}}^\text{nsic}$ are the iteration number of the WMMSE-based approach and the EWR-based method for MMSE-nSIC decoding, respectively.
\section{Downlink transmission}  \label{Section4}
In this section, we consider the downlink transmission. Similarly, we first focus on the SE analysis, then formulate a SE optimization problem and provide an efficient algorithm.
\vspace{-5pt}
\subsection{SE Analysis}
According to the system model in \eqref{downlink}, the ergodic sum rate of the downlink transmission is given by
{\color{black}
\begin{align} \label{R_d}
\mathcal{R}_\text{d}=\sum_{k=1}^{K}\mathbbmss{E}_{\mathbf{g}_k}\left\{ \log_2\left(1+\frac{|\mathbf{g}_k^{\mathsf H} \mathbf{H} \mathbf{Q}\mathbf{w}_k|^2}{\sum_{i\neq k}^{K}|\mathbf{g}_k^{\mathsf H} \mathbf{H} \mathbf{Q}\mathbf{w}_i|^2+N_k }\right)\right\}.
\end{align}
}

Due to the calculation of mathematical expectation, it is difficult to analyze $\mathcal{R}_\text{d}$ directly. {\color{black}To obtain a more tractable approximation of $\mathcal{R}_\text{d}$, we propose the Corollary \ref{corollary1} by employing a similar approach to that used in the derivation of Theorem \ref{Theorem_2}.
\begin{corollary} \label{corollary1}
The ergodic sum rate for downlink system can be approximated by
\begin{align}\label{R_d_approximate}
{\tilde{\mathcal{R}}}_{\text{d}}=\sum_{k=1}^{K} \log_2\left(1+\frac{\|\tilde{\mathbf{G}}_k^{\mathsf H} \mathbf{H} \mathbf{Q}\mathbf{w}_k\|^2}{\sum_{i\neq k}^{K}\|\tilde{\mathbf{G}}_k^{\mathsf H} \mathbf{H} \mathbf{Q}\mathbf{w}_i\|^2+N_k }\right).
\end{align}
\end{corollary}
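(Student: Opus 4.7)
The plan is to mirror the derivation of Theorem \ref{Theorem_2}, since the downlink SINR for user $k$ has exactly the same bilinear structure in $\mathbf{g}_k$ as the per-user rate term in \eqref{MMSE-nSIC}. The essential identity I will exploit is that the second-moment matrix of the Rician channel factorizes as a Gram matrix of the deterministic block $\tilde{\mathbf{G}}_k$. Concretely, substituting $\mathbf{g}_k = \sqrt{\alpha_k K_0/(1+K_0)}\,\bar{\mathbf{g}}_k+\sqrt{\alpha_k/(1+K_0)}\,\mathbf{R}_k^{1/2}\hat{\mathbf{g}}_k$ with $\hat{\mathbf{g}}_k\sim\mathcal{CN}(\mathbf{0},\mathbf{I}_N)$ and using $\mathbbmss{E}\{\hat{\mathbf{g}}_k\}=\mathbf{0}$, $\mathbbmss{E}\{\hat{\mathbf{g}}_k\hat{\mathbf{g}}_k^{\mathsf H}\}=\mathbf{I}_N$, the cross terms vanish and
\begin{align}
\mathbbmss{E}\{\mathbf{g}_k\mathbf{g}_k^{\mathsf H}\}=\tfrac{\alpha_k K_0}{1+K_0}\bar{\mathbf{g}}_k\bar{\mathbf{g}}_k^{\mathsf H}+\tfrac{\alpha_k}{1+K_0}\mathbf{R}_k=\tilde{\mathbf{G}}_k\tilde{\mathbf{G}}_k^{\mathsf H}, \nonumber
\end{align}
where the last equality follows from multiplying the block definition of $\tilde{\mathbf{G}}_k$ in Theorem \ref{Theorem_1} by its conjugate transpose.

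Next I would work inside each summand of \eqref{R_d}. Writing the signal and interference-plus-noise terms as quadratic forms in $\mathbf{g}_k$, I would apply the standard moment-matching approximation $\mathbbmss{E}\{\log_2(1+X/Y)\}\approx \log_2(1+\mathbbmss{E}\{X\}/\mathbbmss{E}\{Y\})$ that was already invoked in the proof of Theorem \ref{Theorem_2}. Using the identity above,
\begin{align}
\mathbbmss{E}\{|\mathbf{g}_k^{\mathsf H}\mathbf{H}\mathbf{Q}\mathbf{w}_j|^2\}=\mathbf{w}_j^{\mathsf H}\mathbf{Q}^{\mathsf H}\mathbf{H}^{\mathsf H}\tilde{\mathbf{G}}_k\tilde{\mathbf{G}}_k^{\mathsf H}\mathbf{H}\mathbf{Q}\mathbf{w}_j=\|\tilde{\mathbf{G}}_k^{\mathsf H}\mathbf{H}\mathbf{Q}\mathbf{w}_j\|^2 \nonumber
\end{align}
for every $j$, so the numerator and each interference term in the approximated SINR collapse precisely into the squared norms appearing in \eqref{R_d_approximate}. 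Summing over $k$ yields $\tilde{\mathcal{R}}_{\text{d}}$.

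The main obstacle is not the algebraic manipulation, which is short, but justifying the moment-matching step in a way that is consistent with the proof of Theorem \ref{Theorem_2}. Since $\log_2(1+X/Y)$ is neither jointly concave nor convex in $(X,Y)$, I cannot invoke Jensen's inequality directly; instead I would follow the same convention used earlier in the paper, namely treating the numerator and denominator as concentrating around their means (motivated by the large number of metamaterial elements $N=SL$ and the law of large numbers applied to the quadratic forms in $\hat{\mathbf{g}}_k$). I would therefore state the result as an approximation rather than a bound, noting that the accuracy will be validated numerically in Section V, which is exactly the status afforded to Theorem \ref{Theorem_2}. A brief sentence at the end would remark that $\tilde{\mathcal{R}}_{\text{d}}$ depends on $\{\mathbf{w}_k\}$ and $\mathbf{Q}$ only through deterministic second-order statistics, enabling the subsequent PDD-based joint optimization.
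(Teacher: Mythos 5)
Your proposal is correct and follows essentially the same route as the paper, which itself only states that Corollary \ref{corollary1} is obtained ``by employing a similar approach to that used in the derivation of Theorem \ref{Theorem_2}'': the key steps are the identity $\mathbbmss{E}\{\mathbf{g}_k\mathbf{g}_k^{\mathsf H}\}=\tilde{\mathbf{G}}_k\tilde{\mathbf{G}}_k^{\mathsf H}$ and the replacement of the random quadratic forms by their means, exactly as you do. The only cosmetic difference is that the paper's Theorem~\ref{Theorem_2} argument reaches the moment-matched SINR by writing $\log_2(1+S/(I+N_k))$ as a difference of two concave terms and applying Jensen's inequality to each (so the result is likewise an approximation rather than a bound, since subtracting two upper bounds preserves neither direction), whereas you justify the same final formula directly via concentration of the quadratic forms --- the resulting expression \eqref{R_d_approximate} is identical.
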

}

Compared to $\mathcal{R}_\text{d}$, the closed-form approximation is more tractable to analyze and optimize. In the following, we will regard $\tilde{\mathcal{R}}_\text{d}$ as the objective to optimize the BS transmit beamformer $\mathbf{W}$ and the DMA weight matrix $\mathbf{Q}$.
\vspace{-5pt}
\subsection{Problem Formulation}
Based on the approximation derived below, we aim to jointly optimize the $\mathbf{W}$ and $\mathbf{Q}$ for maximizing $\tilde{\mathcal{R}}_\text{d}$ as well as improve SE. Thus, the problem can be formulated as
\begin{subequations}\label{P_d}
\begin{align}
{\mathcal{P}}_{\text{d}}:~\max_{\mathbf{Q},\mathbf{w}} &~\tilde{\mathcal R}_\text{d} \\
{\text{s.t.}}&~\eqref{cons_Q},\eqref{cons_Q2},\\ &~\sum\nolimits_{k=1}^{K}\|\mathbf{HQ}\mathbf{w}_k\|^2\leq P_\text{max}, \label{P_d_cons}
\end{align}
\end{subequations}
where $P_\text{max}$ is the maximum available transmit power of the BS. This problem is intractable due to the following reasons. First, the objective function is non-convex concerning $\mathbf{Q}$ and $\mathbf{W}$. Second, the structure constraint of $\mathbf{Q}$ is non-convex, which further complicates the optimization procedure. Third, the variables $\mathbf{Q}$ and $\mathbf{W}$ are coupled in the constraint \eqref{P_d_cons}, which makes it difficult to optimize them concurrently. Next, we resort to the FP frame \cite{Shen2018} to seek a more mathematically tractable problem by the following theorem.
\begin{theorem} \label{Theorem_4}
Problem $\mathcal{P}_\text{d}$ is equivalent to
\begin{subequations}\label{P3}
\begin{align}
{\mathcal P}_3: ~\max_{\mathbf{Q},\mathbf{W},\bm\rho,\bm\Gamma} &~ \mathcal{F}_1 (\mathbf{Q},\mathbf{W},\bm\rho,\bm\Gamma) =\sum_{k=1}^{K} \log_2(1+\rho_k) \nonumber\\
&~ -\sum_{k=1}^{K}\rho_k +\sum_{k=1}^{K}(1+\rho_k) \mathcal{A}_k \label{P3_obj}\\
{\rm{s.t.}}&~ \eqref{cons_Q},\eqref{cons_Q2}, \\
&~\sum\nolimits_{k=1}^{K}\|\mathbf{HQ}\mathbf{w}_k\|^2\leq P_\text{max}, \label{P3_cons}
\end{align}
\end{subequations}
where $\bm\rho=[\rho_1,\dots,\rho_K]$ and $\bm\Gamma=[\bm\gamma_1,\dots,\bm\gamma_K]$ are two series of auxiliary variables, and $\mathcal{A}_k=2\Re\left(\bm\gamma_k^{\mathsf H}\tilde{\mathbf{G}}_k^{\mathsf H} \mathbf{H} \mathbf{Q}\mathbf{w}_k \right) - \left(\sum_{i=1}^{K}\mathbf{w}_i^{\mathsf H} \mathbf{Q}^{\mathsf H}\mathbf{H}^{\mathsf H} \tilde{\mathbf{G}}_k \tilde{\mathbf{G}}_k^{\mathsf H} \mathbf{H} \mathbf{Q} \mathbf{w}_i+N_k\right)\bm\gamma_k^{\mathsf H}\bm\gamma_k$.
\end{theorem}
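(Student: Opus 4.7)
The plan is to recognise $\mathcal{P}_\text{d}$ as an instance of the sum-of-logarithms-of-ratios problem addressed by Shen and Yu's fractional programming (FP) toolkit \cite{Shen2018}, and to reach $\mathcal{P}_3$ by applying the Lagrangian dual transform and the vector quadratic transform in succession. Since neither transform touches the feasible set, the constraints \eqref{cons_Q}, \eqref{cons_Q2} and the coupled power constraint \eqref{P_d_cons} carry over unchanged to \eqref{P3}; the substance of the theorem is therefore an identity between two nested maximisation values.

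First I would write each summand of $\tilde{\mathcal{R}}_\text{d}$ as $\log_2(1+\mathrm{SINR}_k)$ with $\mathrm{SINR}_k=\mathcal{N}_k/\mathcal{D}_k$, where $\mathcal{N}_k=\|\tilde{\mathbf{G}}_k^{\mathsf H}\mathbf{H}\mathbf{Q}\mathbf{w}_k\|^2$ and $\mathcal{D}_k=\sum_{i\neq k}\|\tilde{\mathbf{G}}_k^{\mathsf H}\mathbf{H}\mathbf{Q}\mathbf{w}_i\|^2+N_k$. The Lagrangian dual transform introduces a scalar auxiliary $\rho_k\ge 0$ and replaces the summand by $\log_2(1+\rho_k)-\rho_k+(1+\rho_k)\,\mathcal{N}_k/(\mathcal{N}_k+\mathcal{D}_k)$; setting $\partial/\partial\rho_k=0$ yields $\rho_k^\star=\mathrm{SINR}_k$, and back-substitution recovers the original log term, so the outer $\max_{\bm\rho}$ preserves the objective pointwise in $(\mathbf{Q},\mathbf{W})$.

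Next I would linearise the residual ratio $\mathcal{N}_k/(\mathcal{N}_k+\mathcal{D}_k)=\|\mathbf{a}_k\|^2/(\sum_i\|\tilde{\mathbf{G}}_k^{\mathsf H}\mathbf{H}\mathbf{Q}\mathbf{w}_i\|^2+N_k)$, where $\mathbf{a}_k=\tilde{\mathbf{G}}_k^{\mathsf H}\mathbf{H}\mathbf{Q}\mathbf{w}_k$, by the vector quadratic transform: introduce $\bm\gamma_k\in\mathbbmss{C}^{N+1}$ and rewrite the ratio as $2\Re(\bm\gamma_k^{\mathsf H}\mathbf{a}_k)-(\mathcal{N}_k+\mathcal{D}_k)\bm\gamma_k^{\mathsf H}\bm\gamma_k$. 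This is precisely the quantity $\mathcal{A}_k$ in the statement. Concavity in $\bm\gamma_k$ makes the unconstrained maximiser $\bm\gamma_k^\star=\mathbf{a}_k/(\mathcal{N}_k+\mathcal{D}_k)$, and re-insertion returns the original ratio. Combining the two transforms, the inner maximisation $\max_{\bm\rho,\bm\Gamma}\mathcal{F}_1(\mathbf{Q},\mathbf{W},\bm\rho,\bm\Gamma)$ collapses to $\tilde{\mathcal{R}}_\text{d}(\mathbf{Q},\mathbf{W})$ for every feasible pair $(\mathbf{Q},\mathbf{W})$, which establishes the equivalence with $\mathcal{P}_\text{d}$.

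The principal obstacle is not conceptual but lies in invoking the correct vector-valued form of the quadratic transform: the numerator of $\mathrm{SINR}_k$ is a squared vector norm rather than a scalar squared modulus, so the one-line scalar version cannot be applied verbatim. Fortunately this is exactly the matrix/vector extension proved in \cite{Shen2018}, and the underlying argument reduces to the concavity of $2\Re(\bm\gamma_k^{\mathsf H}\mathbf{a}_k)-c\|\bm\gamma_k\|^2$ in $\bm\gamma_k$ for any fixed $c>0$. Once this lemma is available, the remainder of the proof is bookkeeping: extract the SINR, run the two transforms in order, and verify that their coupled fixed points reproduce the original objective.
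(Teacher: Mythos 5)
Your proposal is correct and follows essentially the same route as the paper: the paper also proves the theorem in two stages, first establishing equivalence with an intermediate problem via the Lagrangian dual transform on $\bm\rho$ (with $\rho_k^\star$ equal to the SINR after a Woodbury-type simplification), and then applying the vector quadratic transform on $\bm\gamma_k$, verifying in each case that back-substituting the first-order-optimal auxiliary variable recovers the preceding objective while the feasible set is untouched.
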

\begin{proof}
Please see Appendix \ref{Proof_D} for more details.
\end{proof}
%Note that problem $\mathcal{P}_3$ is easier to solve since $\mathcal{F}_1$ is concave over each variable while fixing others. However, the coupling of $\mathbf{Q}$ and $\mathbf{W}$ is still difficult to solve. To simplify the problem, some works imposed the power constraint on the digital output but not the transmitted signal \cite{Zhang2021,Zhang2022,Huang2024}. Then, one can design the overall system by alternately optimizing variables $\{\mathbf{Q},\mathbf{W},\bm\rho,\bm\Gamma\}$ until the objective function in \eqref{P3_obj} converges, and scaling the digital precoder $\mathbf{W}$ to satisfy the constraint in \eqref{P3_cons}. Besides, a common and reasonable design is to let the hybrid beamforming design be sufficiently ``close'' to the fully-digital solution \cite{Yu2016,Ioushua2019,Shlezinger2019,Zhang2022}.
%
%Both of the above two approaches relaxed the power constraint, which made the problem $\mathcal{P}_\text{d}$ more tractable but may reduce the system performance. Therefore, to compensate this deficiency, we propose an improved PDD-based joint optimization algorithm based on \cite[Table 2]{Shi2020}.
Note that problem $\mathcal{P}_3$ is easier to solve since $\mathcal{F}_1$ becomes concave over each variable with others being fixed. However, the coupling of $\mathbf{Q}$ and $\mathbf{W}$ in the power constraint still poses challenges. To tackle this issue, some works relaxed the power constraint by removing H and Q in \eqref{P3_cons} \cite{Zhang2021,Zhang2022,Kimaryo2023}. Then, one can design the overall system by alternately optimizing variables $\{\mathbf{Q},\mathbf{W},\bm\rho,\bm\Gamma\}$ until the objective function in \eqref{P3_cons} converges. Finally, the digital precoder $\mathbf{W}$ is scaled to resatisfy the constraint in \eqref{P3_cons}. However, this approach is crude and may reduce the system performance. Therefore, to compensate for this deficiency, we propose a PDD-based joint optimization algorithm to solve $\mathcal{P}_3$ \cite{Shi2020}.

\subsection{The PDD-Based Algorithm}  \label{PDD}
The proposed PDD-based algorithm is characterized by an embedded double-loop structure. Specifically, the inner loop solves the augmented Lagrangian (AL) subproblem while the outer loop updates the penalty parameter or dual variable based on the constraint violation. To deal with the coupling constraint in \eqref{P3_cons}, we introduce a new auxiliary variable $\mathbf{V}$ that is subjected to $\mathbf{V}=\mathbf{HQW}$ and $\mathsf{tr}(\mathbf{V}\mathbf{V}^{\mathsf H}) \leq P_\text{max}$. Then $\mathcal{P}_3$ can be rewritten as
\begin{subequations}\label{P4}
\begin{align}
\mathcal{P}_4:~\max_{\mathbf{Q},\mathbf{W},\bm\rho,\bm\Gamma} &~ \sum_{k=1}^{K} \log_2(1+\rho_k) -\sum_{k=1}^{K}\rho_k +\sum_{k=1}^{K}(1+\rho_k)\mathcal{C}_k \\
{\rm{s.t.}}&~ \eqref{cons_Q},\eqref{cons_Q2},~ \mathsf{tr}\left(\mathbf{V}^\mathsf{H} \mathbf{V}\right)\leq P_\text{max},\\
&~\mathbf{V}=\mathbf{HQW},   \label{eq_V}
\end{align}
\end{subequations}
where $\mathbf{V}=\left\{\mathbf{v}_1,\dots,\mathbf{v}_K\right\}$ and $\mathcal{C}_k=2\Re\left(\bm\gamma_k^{\mathsf H}\tilde{\mathbf{G}}_k^{\mathsf H} \mathbf{v}_k \right) - \left(\sum_{i=1}^{K}\mathbf{v}_i^{\mathsf H} \tilde{\mathbf{G}}_k \tilde{\mathbf{G}}_k^{\mathsf H} \mathbf{v}_i+N_k\right)\bm\gamma_k^{\mathsf H}\bm\gamma_k$. By moving the equality constraints as a penalty term to the objective function, we convert problem $\mathcal{P}_4$ into its AL form:
\begin{subequations}\label{P_5}
\begin{align}
\mathcal{P}_5:~\max_{\mathbf{Q},\mathbf{W},\bm\rho,\bm\Gamma} &~ \sum_{k=1}^{K} \log_2(1+\rho_k) -\sum_{k=1}^{K}\rho_k +\sum_{k=1}^{K}(1+\rho_k) \mathcal{C}_k  \nonumber\\
&~  -\frac{1}{2\beta}\left\|\mathbf{HQW}-\mathbf{V}+\beta{\bm\Xi}\right\| _\text{F}^2 \label{P_AL_obj}\\
{\rm{s.t.}}&~ \eqref{cons_Q},\eqref{cons_Q2}, ~\mathsf{tr}\left(\mathbf{V}^\mathsf{H} \mathbf{V}\right)\leq P_\text{max},
\end{align}
\end{subequations}
where $\beta >0$ is the penalty parameter, $\bm\Xi=\left\{\bm\xi_1,\dots, \bm\xi_K\right\} \in \mathbbmss{C}^{N \times K}$ denotes the dual variables associated with the equality constraint in \eqref{eq_V}. In the following, we solve the AL subproblem in the inner loop by alternately optimizing $\{\bm\rho,\bm\Gamma,\mathbf{V},\mathbf{W},\mathbf{Q}\}$ one by one with other variables fixed.

\textbf{Step 1}, we optimize $\bm\rho$ with other variables fixed. By checking the first-order optimality condition, the optimal $\rho_k^\star$ is given by
\begin{align}\label{opt_rho}
\rho_k^{\star}=\frac{\mathbf{v}_k^{\mathsf H}\tilde{\mathbf{G}}_k \tilde{\mathbf{G}}_k^{\mathsf H}\mathbf{v}_k}{\sum_{i\neq k}^{K} \mathbf{v}_i^{\mathsf H}\tilde{\mathbf{G}}_k \tilde{\mathbf{G}}_k^{\mathsf H}\mathbf{v}_i+N_k}.
\end{align}

\textbf{Step 2}, we optimize $\bm\Gamma$ by holding other variables fixed. Similar to Step 1, we check the first-order optimality condition of $\bm\gamma_k$ and obtain the optimal solution of $\bm\gamma_k$ written by
\begin{align}\label{opt_Gamma}
\bm\gamma_k^{\star}=\left(\sum_{i=1}^{K}\mathbf{v}_i^{\mathsf H}\tilde{\mathbf{G}}_k \tilde{\mathbf{G}}_k^{\mathsf H}\mathbf{v}_i+N_k\right)^{-1} \tilde{\mathbf{G}}_k^{\mathsf H} \mathbf{v}_k.
\end{align}

\textbf{Step 3}, we optimize $\mathbf{V}$ by fixing other variables. Specifically, by ignoring irrelevant items, the subproblem for optimizing $\mathbf{v}_k$ can be written as
\begin{subequations}\label{P_V}
\begin{align}
{\mathcal P}_{\mathbf{v}_k}: ~\min_{\mathbf{v}_k} &~
f_{\mathbf{V}}(\mathbf{v}_k)=\mathbf{v}_k^{\mathsf H} \bm\Omega \mathbf{v}_k-2\Re \left( (1+\rho_k) \mathbf{v}_k^{\mathsf H}   \tilde{\mathbf{G}}_k \bm\gamma_k \right)  \nonumber \\
&~ +\frac{1}{2\beta}\sum\nolimits_{k=1}^{K} \left\|\mathbf{HQ}\mathbf{w}_k-\mathbf{v}_k+\beta{\bm\xi_k}\right\|^2  \\
{\rm{s.t.}}&~ \sum\nolimits_{k=1}^{K}\|\mathbf{v}_k\|^2\leq P_\text{max},
\end{align}
\end{subequations}
where $\bm\Omega=\sum_{i=1}^{K} (1+\rho_i) \tilde{\mathbf{G}}_i \tilde{\mathbf{G}}_i^{\mathsf H} \bm\gamma_i^{\mathsf H} \bm\gamma_i$. Note that ${\mathcal P}_{\mathbf{v}_k}$ is a standard convex problem, whose relevant Lagrangian function can be written as
\begin{align}\label{L_v}
\mathcal{L}(\mathbf{v}_k,\lambda)=f_{\mathbf{V}}(\mathbf{v}_k) + \lambda_k \left(\sum_{k=1}^{K}\mathbf{v}_k^{\mathsf H} \mathbf{v}_k-P_\text{max} \right),
\end{align}
where $\lambda_k \geq 0$ is the Lagrangian multiplier. By checking the first-order optimality condition, the optimal $\mathbf{v}_k^\star$ is given by
\begin{align} \label{opt_V}
\mathbf{v}_k^\star=\left(\bm\Psi+\lambda_k \mathbf{I}_L\right)^{-1} \bm\phi_k,
\end{align}
where $\bm\Psi=\bm\Omega+\frac{1}{2\beta}\mathbf{I}$ and $\bm\phi_k=(1+\rho_k) \tilde{\mathbf{G}}_k \bm\gamma_k + \frac{1}{2\beta}(\mathbf{HQ}\mathbf{w}_k+\beta \bm\xi_k)$. If $\mathbf{v}_k$ satisfies the constraint $\sum\nolimits_{k=1}^{K}\|\mathbf{v}_k\|^2\leq P_\text{max}$ when $\lambda_k = 0$, then we have $\lambda_k = 0$. Otherwise, the optimal $\lambda_k$ is determined by checking the complementary slackness $\mathsf{tr}(\mathbf{V}^\mathsf{H} \mathbf{V})-P_\text{max}=0$. Specifically, by performing singular value decomposition (SVD) on matrix $\bm\Psi$, i.e., $\bm\Psi=\mathbf{U}\bm\Lambda \mathbf{U}^{\mathsf H}$, where  $\bm\Lambda=\mathsf{diag}(\Lambda_1,\dots,\Lambda_L)$ with $\{\Lambda_1,\dots,\Lambda_L\}$ being the singular values, we have $\mathbf{V}=\mathbf{U}\left(\bm\Lambda+\lambda_k \mathbf{I}_L\right)^{-1}\mathbf{U}^{\mathsf H} \bm\Phi$ with $\bm\Phi=\left[\bm\phi_1,\dots,\bm\phi_K\right]$. By denoting $\mathbf{X}=\mathbf{U}^{\mathsf H}\bm\Phi \bm\Phi^ {\mathsf H} \mathbf{U}$ and after some basic mathematical
manipulations, the complementary slackness can be rewritten by $\sum_{i=1}^{L} \frac{[\mathbf{X}]_{i,i}}{(\Lambda_i+\lambda_k)^2} -P_\text{max}=0$. It is noted that $\sum_{i=1}^{L} \frac{[\mathbf{X}]_{i,i}}{(\Lambda_i+\lambda_k)^2}$ decreases versus $\lambda_k$, thus the optimal $\lambda_k$ in \eqref{opt_V} can be obtained efficiently by bisection search. Then $\mathbf{v}_k^\star$ follows immediately.

\textbf{Step 4}, we optimize $\mathbf{W}$ with other variables fixed. The subproblem of optimizing $\mathbf{W}$ is given by
\begin{align}
{\mathcal P}_{\mathbf{W}}: ~\min_{\mathbf{W}}~ \left\|\mathbf{HQW}-\mathbf{V}+\beta{\bm\Xi}\right\|_\text{F}^2.
\end{align}
By checking the first-order optimality condition of the objective function, we obtain the optimal $\mathbf{W}$ as follows
\begin{align}\label{opt_W}
\mathbf{W}^{\star}=\left(\mathbf{Q}^{\mathsf H} \mathbf{H}^{\mathsf H} \mathbf{HQ}\right)^{-1} \mathbf{Q}^{\mathsf H} \mathbf{H}^{\mathsf H} (\mathbf{V}-\beta\bm\Xi).
\end{align}

\textbf{Step 5}, we optimize $\mathbf{Q}$ by fixing other variables, which yields the following problem:
\begin{subequations}\label{P_Q}
\begin{align}
{\mathcal P}_{\mathbf{Q}}: ~\min_{\mathbf{Q}} &~
\left\|\mathbf{HQW}-\mathbf{V}+\beta{\bm\Xi}\right\|_\text{F}^2  \\
{\rm{s.t.}}&~ \eqref{cons_Q},\eqref{cons_Q2},
\end{align}
\end{subequations}
By rewriting $\mathbf{HQ}$ as $\tilde{\mathbf{Q}}\tilde{\mathbf{H}}$ and after some basic mathematical manipulations, we arrive at the following problem:
\begin{subequations}\label{Pq_d}
\begin{align}
{\mathcal P}_{\mathbf{q}}^{\text{d}}: ~\max_{\mathbf{q}} &~ 2\Re(\mathbf{q}^{\mathsf H}\mathbf{c}_2)-\mathbf{q}^{\mathsf H} \mathbf{D}_2 \mathbf{q} \label{Pq_d_obj}\\
{\rm{s.t.}}&~ q_n \in \mathcal{Q}, \forall n \in \mathcal{N}, \label{Pq_d_cons}
\end{align}
\end{subequations}
where $\mathbf{D}_2=\mathbf{I}_N \odot (\tilde{\mathbf{H}}\mathbf{W}\mathbf{W}^{\mathsf H} \tilde{\mathbf{H}}^{\mathsf H})$ and $\mathbf{c}_2=\mathsf{diag}\left((\mathbf{V}-\beta\bm\Xi)\mathbf{W}^{\mathsf H}\tilde{\mathbf{H}}^{\mathsf H}\right)$. It is observed that the problem $\mathcal{P}_{\mathbf{q}}^\text{d}$ has the same form as $\mathcal{P}_{\mathbf{q}}^\text{sic}$ in \eqref{Pq_sic}, thus we could use the proposed EWR-based method in Algorithm \ref{Algorithm1} to solve it.

By performing Step 1--5 in sequence until the objective function in \eqref{P_AL_obj} converges, one inner loop is completed.

After examining the inner loop of the PDD-based method, we now turn our attention to the outer loop. In this loop, we begin by calculating the constraint violation $h$ as follows
\begin{align}\label{h}
h=\left\|\mathbf{HQW}-\mathbf{V}\right\|_\text{F}^2.
\end{align}
The value of $h$ determines whether to update the dual variable $\bm\Xi$ or the penalty parameter $\beta$. Specifically, the dual variable is updated by $\bm\Xi^{t+1}=\frac{1}{\beta}(\mathbf{HQW}-\mathbf{V}) +\bm\Xi^{t}$ and the penalty parameter is updated by $\beta^{t+1}=c\beta^t$ \cite{Shi2020}, where $t$ is the outer iteration index and $c$ is the scaling factor. The outer loop terminates until the constraint violation $h$ belows a threshold.

\begin{algorithm}[!t]
    \caption{PDD-based method}
    \label{Algorithm3}
    Initialize primary variables $\left\{\bm\rho, \bm\Gamma, \mathbf{W}, \mathbf{Q}, \mathbf{V}\right\}$, dual variable $\bm\Xi$, constraint violation $h$, threshold $\epsilon$, $\eta$, scaling factor $c_1<1$, $c_2<1$, outer iteration index $t=0$, and penalty factor $\beta>0$\;
    \Repeat{$h< \epsilon$}
    {
    \Repeat{the objective function in \eqref{P_AL_obj} converges}
    {
    Update $\bm\rho$ based on \eqref{opt_rho}\;
    Update $\bm\Gamma$ based on \eqref{opt_Gamma}\;
    Update $\mathbf{V}$ based on \eqref{opt_V}\;
    Update $\mathbf{W}$ based on \eqref{opt_W}\;
    Update $\mathbf{Q}$ based on Algorithm \ref{Algorithm1}\;
    }
    Calculate constraint violation $h$ based on \eqref{h}\;
    \eIf{$h < \eta^t$}{
    Update $\bm\Xi^{t+1}=\frac{1}{\beta}(\mathbf{HQW}-\mathbf{V}) +\bm\Xi^{t}$\;} {
    Update $\beta^{t+1}=c_1\beta^t$\;
    }
    Update $\eta^{t+1}=c_2h$, $t=t+1$\;
    }
    \textbf{Output}: $\mathbf{W}$, $\mathbf{Q}$.
\end{algorithm}

\subsection{Convergence and Complexity Analysis}
The overall proposed PDD-based method is summarized in Algorithm \ref{Algorithm3}, which is guaranteed to converge to the set of stationary solutions of problem $\mathcal{P}_\text{d}$ \cite{Shi2020}. In terms of computational complexity, it is mainly determined by the complexity of updating variables in the inner loop. For each iteration of the inner loop, the computational complexity for optimizing each variable in $\left\{ \bm\rho,\bm\Gamma,\mathbf{V},\mathbf{W},\mathbf{Q}\right\}$ can be estimated sequentially as $\mathcal{O}(K^2N^2)$, $\mathcal{O}(K^2N^2)$, $\mathcal{O}(KN^3)$, $\mathcal{O}(L^2 N)$, $\mathcal{O}(I_\text{EWR}^\text{d}N^2)$, respectively, where $I_\text{EWR}^\text{d}$ is the iteration number of the EWR-based method for solving problem $\mathcal{P}_\mathbf{q}^\text{d}$ in \eqref{Pq_d}. In summary, the overall complexity of Algorithm \ref{Algorithm3} is evaluated by $\mathcal{O}\left(I_\text{in} I_\text{out} \left(K^2N^2+K^2N^2+KN^3 +L^2 N+ I_\text{EWR}^\text{d}N^2\right)\right)$, which is in polynomial time.

\section{Numerical Results}   \label{Section5}
{\color{black}In this section, we consider a narrowband system and provide simulation results to validate the performance of the proposed algorithms and reveal some important insights. The specific simulation parameters are listed in TABLE \ref{table3}.} For the considered Rician fading channel, we model the large scale path-loss coefficient $\alpha_k$ as $\alpha_k=\alpha_0\left(\frac{D_k}{D_0}\right)^{-\Gamma_k}$, where $\alpha_0=-30\text{dB}$ is the path loss at reference distance $D_0=1\text{m}$, $\Gamma_k=2.5$ is the path loss exponents of the propagation environment, and $D_k$ is the distance from the UE $k$ to the BS. {\color{black}We place DMA on the (x,z) plane at the location of $(0, 0, 20\text{m})$ to cover more users. Then the LOS component $\bar{\mathbf{g}}_k$ can be expressed by the array response of the uniform planar array (UPA), which is given by
\begin{align}\label{LOS}
\bar{\mathbf{g}}_k&=\left[1,\dots, {\rm{e}}^{{\rm{j}}\frac{2\pi}{\lambda_c}\big(\sin \omega \cos \psi {\rm{mod}}(n-1,S) d_x +\cos\omega\lfloor n-1,S\rfloor d_z\big)} ,\right. \nonumber \\
& \left.\dots,{\rm{e}}^{{\rm{j}}\frac{2\pi} {\lambda_c} \big(\sin\omega \cos\psi (S-1)d_x + \cos\omega (L-1)d_z \big)}\right]^{\mathsf T},
\end{align}
where $\lambda_c$ is the wavelength of the carrier wave, $d_z$ is the spacing between each microstrip, $d_x$ is the spacing between each element on every microstrip, $\psi$ denotes the azimuth angles of arrival (AoA), and $\omega$ denotes elevation AoA, respectively. The users are uniformly distributed on a circle centered at $(0, 200\text{m}, 0)$ with radius $d_0 =20\text{m}$, as shown in {\figurename} {\ref{Simulation_setup}}.} Besides, the well-known Kronecker model is considered for $\mathbf{R}_k$, i.e., $\mathbf{R}_k= \mathbf{R}_{\text{H},k}\otimes\mathbf{R}_{\text{V},k}$. Here, $\mathbf{R}_{\text{H},k}$ and $\mathbf{R}_{\text{V},k}$ are the spatial correlation matrices for user $k$ of the horizontal and vertical domains, respectively, and are defined by the  exponential correlation model as follows:
\begin{align}\label{correlation}
[\mathbf{R}]_{i,j}=\left\{\begin{array}{ll}
r^{i-j}, & i\leq j \\
\left[\mathbf{R}\right]_{j,i} , & i>j
\end{array},\right.
\end{align}
where $0<r<1$ is the correlation coefficient. For simplicity, we set the correlation coefficient $r=0.7$ for $\forall k \in \mathcal{K}$. Unless otherwise specified, we set $K=4$ and $K_0=10$ dB to simulate suburban and open road scenario \cite{Han2019}. All the results are averaged over $10^4$ independent channel realizations.

\begin{table}[!h]
\caption{Simulation parameters.}
\label{table3}
\centering
{\color{black}
\begin{tabular}{|>{\raggedright\arraybackslash}p{3.8cm}|> {\raggedright\arraybackslash}p{3.3cm}|}
\hline
\textbf{Parameters} & \textbf{Values} \\ \hline
Rician factor &  $K_0=10$ \cite{Han2019} \\ \hline
Reference distance & $D_0=1$m \\ \hline
Reference path loss & $\alpha_0=-30$dB \\ \hline
Path loss exponents & $\Gamma_k=2.5$ \\ \hline
DMA position &  $(0,0,20\text{m})$ \\ \hline
Wavelength & $\lambda_c=1.07$cm \cite{Zhang2022} \\ \hline
Spacing between each element & $d_x=\lambda_c/2$ \\ \hline
Spacing between each microstrip & $d_z=\lambda_c/2$ \\ \hline
User distribution center & $(0,200\text{m},0)$ \\ \hline
User distribution radius & $d_0=20$m \\ \hline
DMA attenuation coefficient & $\alpha=0.6\text{m}^{-1}$ \cite{Kimaryo2023} \\ \hline
Wavenumber in microstrip & $\beta=827.67\text{m}^{-1}$ \cite{Kimaryo2023} \\ \hline
Correlation coefficient & $r=0.7$ \\ \hline
Noise power at the BS & $N_0=-80$dB \\ \hline
Noise power at the UE & $N_k=-80$dBm \\ \hline
Power consumption of RF chains  &$P_\text{RF}=27$ dBm \\ \hline
Power consumption of BS  & $P_\text{BS}=39$ dBm \\ \hline
Power consumption of the phase shifters & $P_\text{PS}=17$ dBm \\ \hline
Amplifier efficiency factor & $\epsilon=0.35$ \\ \hline
Number of UEs & $K=4$ \\ \hline
Penalty factor & $\beta=10^5$ \\ \hline
PDD scaling factors & $c_1=0.5$, $c_2=1/6$ \\ \hline
PDD threshold & $\epsilon=10^{-5}$ \\ \hline
\end{tabular}}
\end{table}

\begin{figure}[!t]
\centering
\setlength{\abovecaptionskip}{0pt}
\includegraphics[height=0.25\textwidth]{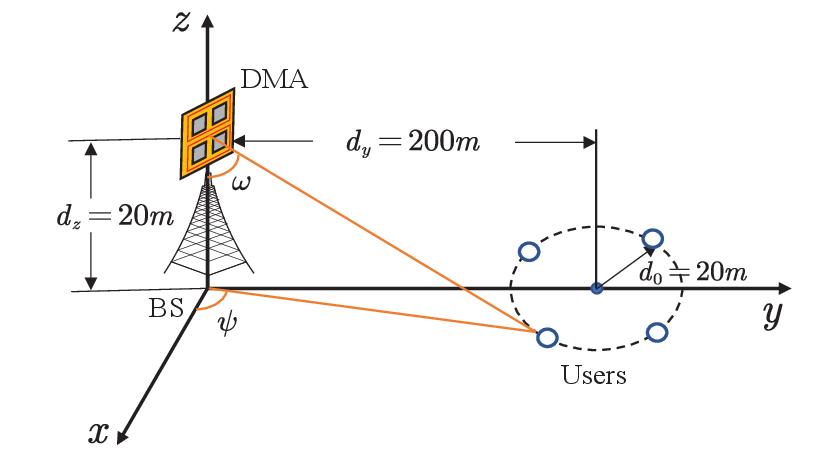}
\caption{Simulation setup.}
\label{Simulation_setup}
\end{figure}

\subsection{Uplink Transmission}
We first evaluate the performance and convergence of the Algorithm \ref{Algorithm2} for uplink transmission with statistical CSI.

\begin{figure}[!t]
\centering
\setlength{\abovecaptionskip}{0pt}
\includegraphics[height=0.32\textwidth]{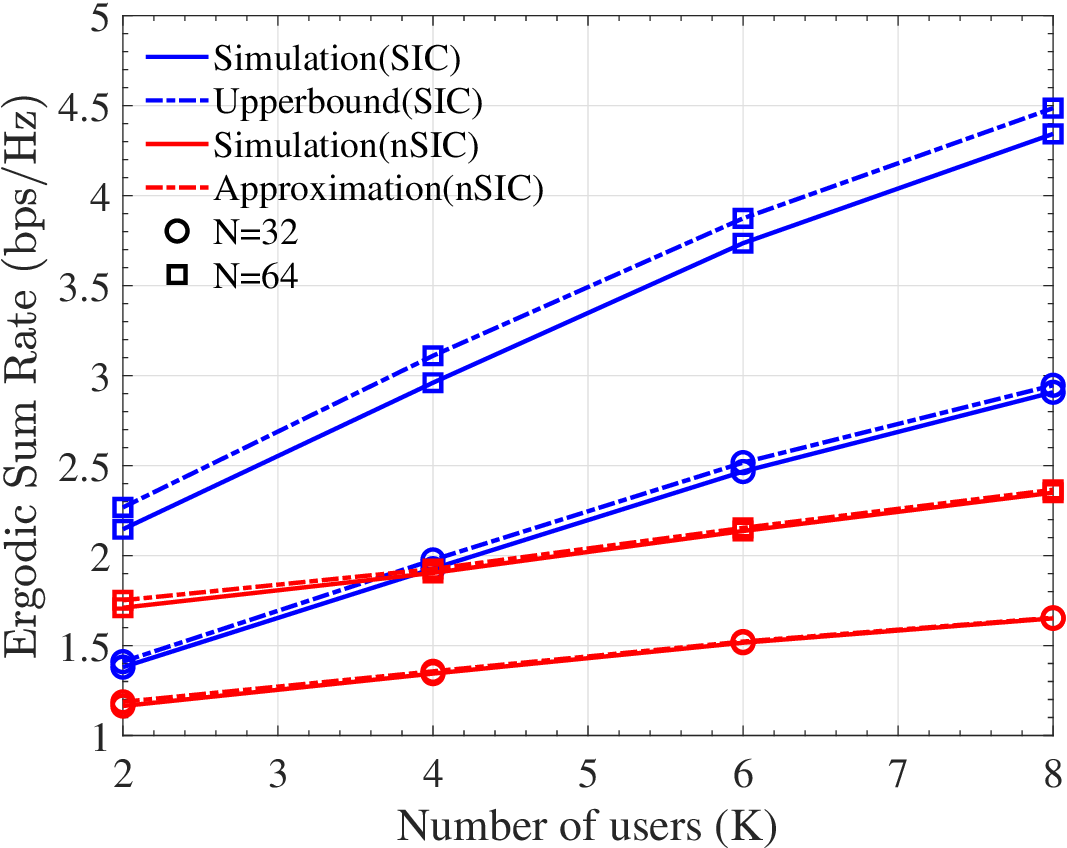}
\caption{The validation of the uplink ergodic sum rate.}
\label{Analytic_up}
\end{figure}

\subsubsection{Validation of the Asymptotic Analysis}
We first use {\figurename} {\ref{Analytic_up}} to validate the conclusions in Theorem \ref{Theorem_1} and Theorem \ref{Theorem_2}. As shown in {\figurename} {\ref{Analytic_up}}, the simulated results with MMSE-SIC decoding are tightly upper bounded by the analytical results, which proves the correctness of the Theorem \ref{Theorem_1}. It is noted that increasing the number of DMA elements can enhance the ergodic sum rate but reduce the tightness of the upper bound. For the case of MMSE-nSIC decoding, the approximated and simulated results almost overlap, which validates the accuracy of the analytical approximation in Theorem \ref{Theorem_2}.

\begin{figure}[!t]
\centering
\setlength{\abovecaptionskip}{0pt}
\includegraphics[height=0.32\textwidth]{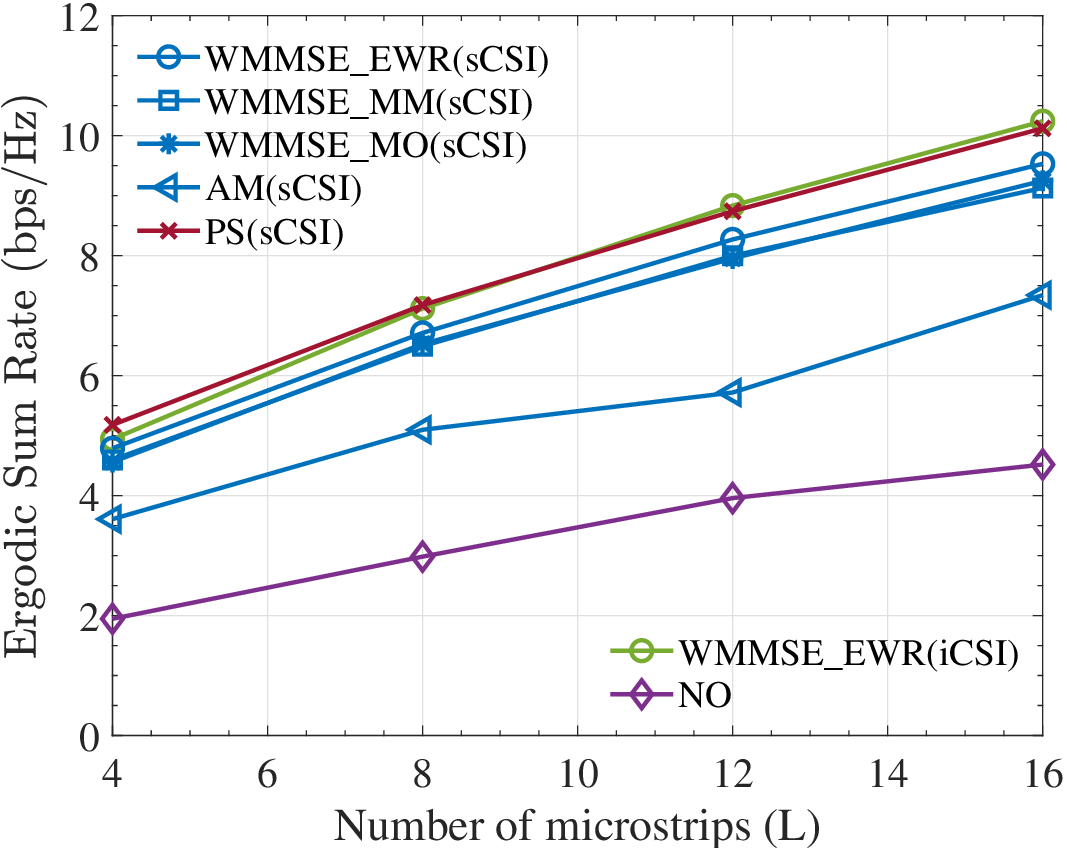}
\caption{The performance comparison with benchmarks.}
\label{Compare_up}
\end{figure}

\subsubsection{Performance Evaluation}
In {\figurename} {\ref{Compare_up}}, we aim to compare our proposed WMMSE-based method with some latest baselines. Due to the fact that most articles consider the case of SIC, we only evaluate the performance of SIC here, i.e., $\mathcal{R}_\text{sic}$ in \eqref{R_SIC}. Specifically, we assume each microstrip is definitely composed of $S=8$ and change the number of DMA elements $N$ by setting different microstrip numbers $L$. {\color{black}The curves of the following schemes are presented:
\begin{itemize}
\item WMMSE\_EWR (sCSI): The proposed WMMSE-based method exploiting only statistical CSI, in which the DMA coefficient matrix is optimized using EWR-based approach described in Algorithm \ref{Algorithm1}.
\item WMMSE\_MM (sCSI): A WMMSE-based method with statistical CSI, where the DMA matrix is optimized via the minorization-maximization (MM) method \cite{Xu2024}.
\item WMMSE\_MO (sCSI): A WMMSE-based method with statistical CSI, where the DMA matrix is optimized using the manifold optimization (MO)-based method \cite{Kimaryo2023,Chen2025}.
\item AM (sCSI): The alternating minimization (AM) algorithm based on statistical CSI \cite{Shlezinger2019,You2023}.
\item PS (sCSI): The proposed WMMSE-based method with statistical CSI, where the DMA is replaced by a partially-connected phase shifter (PS)-based multi-antenna array \cite{Zhang2022}.
\item WMMSE\_EWR (iCSI): The WMMSE-based method with instantaneous CSI \cite{Shi2011}, where the DMA coefficient matrix is optimized by the proposed EWR-based approach in Algorithm \ref{Algorithm1}.
\item NO: A baseline scheme in which the $q_{l,s}$ in \eqref{cons_Q} is set as $1$ for $\forall l \in \mathcal{L}$ and $\forall s \in \mathcal{S}$.
\end{itemize}

It is noted that the proposed WMMSE-based algorithm effectively improves the ergodic sum rate and outperforms the baseline AM-based method, which demonstrates the superiority of our proposed algorithm. This outcome is expected, as the relaxation employed in the AM algorithm is relatively inaccurate. Besides, the proposed EWR-based method for optimizing DMA weights matrix performs better than the MM- and MO-based approaches, which further validates the effectiveness of our EWR-based algorithm. Compared to the instantaneous CSI scenario, the performance loss when using only statistical CSI is acceptable, as the simulation assumes a dominant LoS component. Therefore, in scenarios with strong LoS component, relying on statistical CSI offers a favorable trade-off between performance and complexity. It is also noted that the DMA-enabled uplink system underperforms relative to the PS-based multi-antenna system due to its limited phase shift adjustment capability. However, owing to the substantially lower energy consumption of DMA elements compared to phase shifters, DMA-enabled systems achieve superior energy efficiency \cite{You2023,Chen2025}.}

\begin{figure}[!t]
\centering
\setlength{\abovecaptionskip}{5pt}
\includegraphics[height=0.32\textwidth]{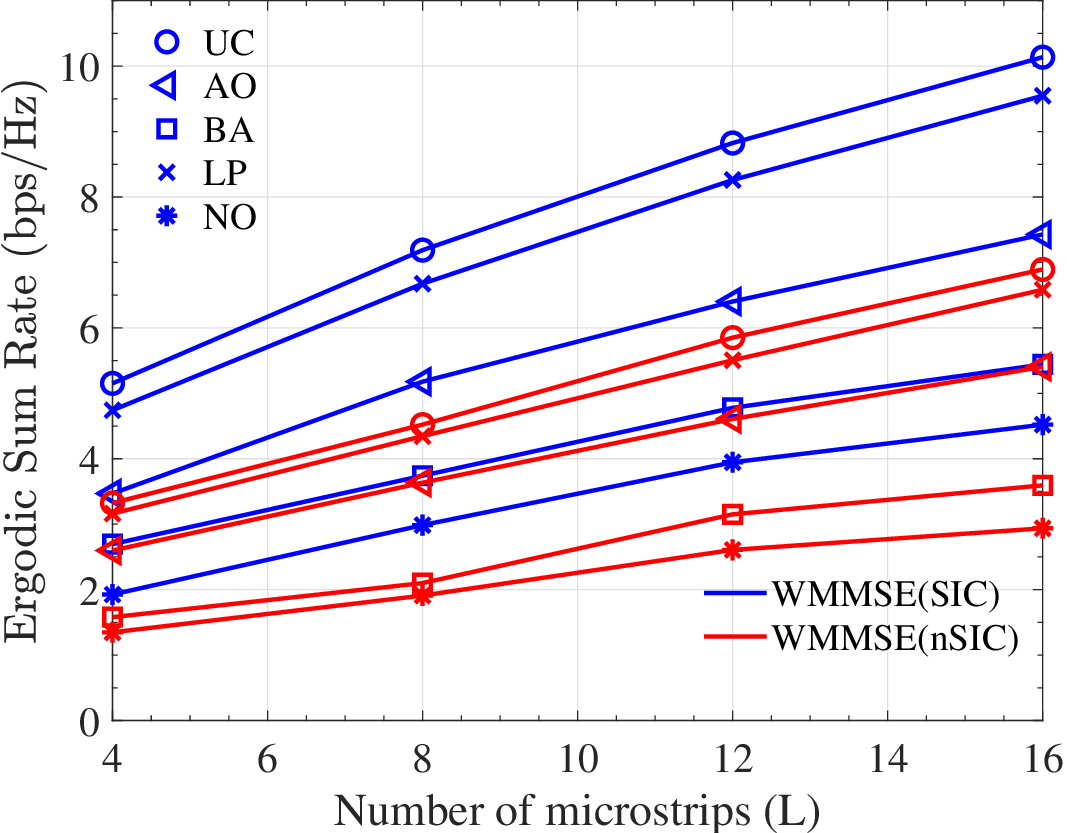}
\caption{The impact of different sets of $\mathcal{Q}$.}
\label{Sets_Q_up}
\end{figure}

\subsubsection{Impact of Different Sets of $\mathcal{Q}$}
{\figurename} {\ref{Sets_Q_up}} illustrates the sum rate performance achieved by different sets $\mathcal{Q}$:
\begin{itemize}
  \item UC: Unconstrained weights, i.e., $\mathcal{Q}=\mathbbmss{C}$;
  \item AO: Amplitude only, here $\mathcal{Q}=[0.001,5]$ refer to \cite{Shlezinger2019};
  \item BA: Binary amplitude, here $\mathcal{Q}=\{0,0.1\}$ refer to \cite{Shlezinger2019};
  \item LP: The considered Lorentzian phase constraint.
\end{itemize}
For the unconstrained weights, the weight matrix $\mathbf{Q}$ can be obtained by exploiting the first-order optimal condition. Besides, it is noted that the objective function in \eqref{Pn_obj} is a quadratic function respecting the real variable $q_n$, which is easy to optimize. Thus, for the case of AO and BA constraints, the problem ${\mathcal P}_{\mathbf{q}}^{\text{sic}}$ and ${\mathcal P}_{\mathbf{q}}^{\text{nsic}}$ can also be solved by the proposed EWR-based method, which highlights the universality and superiority of our proposed algorithm. {\color{black}Moreover, it is observed that SIC decoding significantly outperforms nSIC decoding. Thus, assuming uniform user transmission power, SIC decoding is the preferred choice, offering better performance at the cost of higher computational complexity.}

\begin{figure}[!t]
\centering
\setlength{\abovecaptionskip}{5pt}
\includegraphics[height=0.32\textwidth]{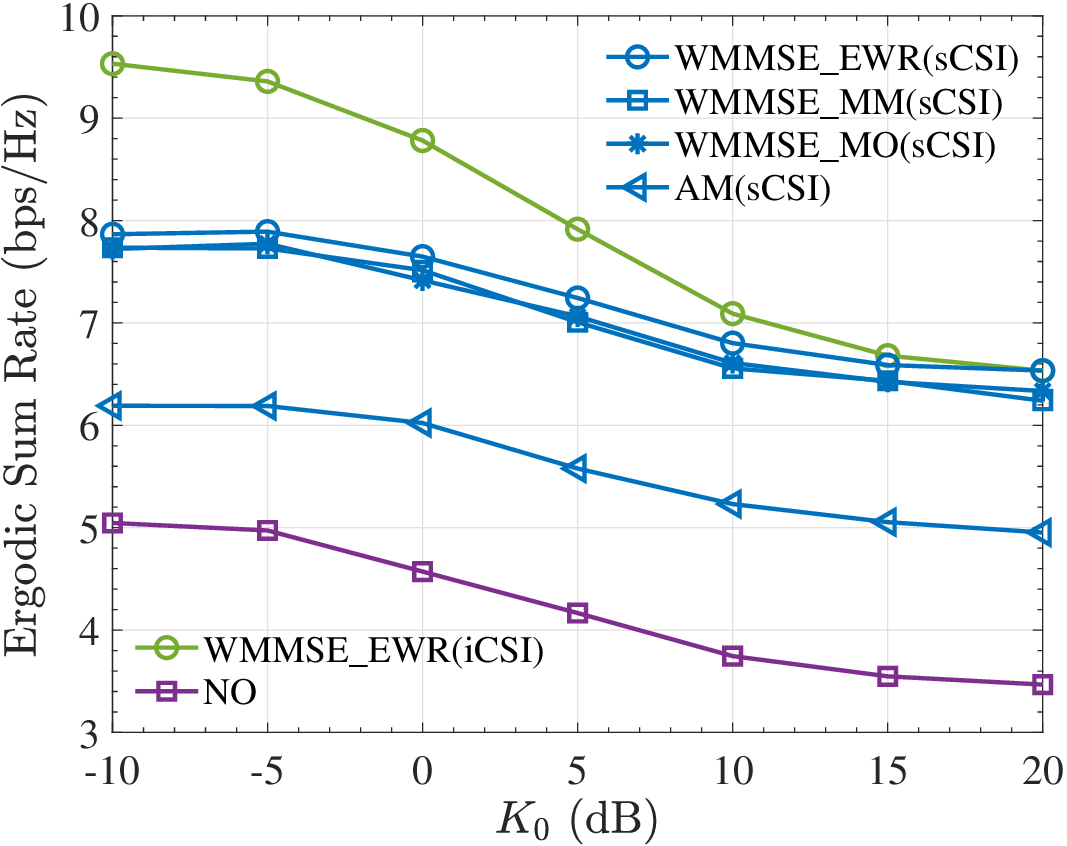}
\caption{The ergodic sum rate versus Rician factors with $N=64$ for uplink transmission.}
\label{Uplink_versus_K}
\end{figure}
{\color{black}
\subsubsection{Impact of the Rician Factor $K_0$}
In {\figurename} {\ref{Uplink_versus_K}}, the uplink ergodic sum rate is plotted versus the Rician factor $K_0$ with $N=64$. As shown, the ergodic sum rate decreases with increasing $K_0$. This trend arises because the users are located relatively close to one another, resulting in similar LoS components between different users and the BS, which leads to more severe inter-user interference as $K_0$ increases. It is also observed that the proposed WMMSE-based and EWR-based methods perform better than the benchmarks schemes. Moreover, the performance gap between schemes utilizing statistical CSI and those based on instantaneous CSI widens as $K_0$ decreases, due to the increased randomness in the channel. Therefore, we conclude that statistical CSI-based designs become more favorable as $K_0$ increases.}

\begin{figure}[!t]
\centering
\setlength{\abovecaptionskip}{5pt}
\includegraphics[height=0.32\textwidth]{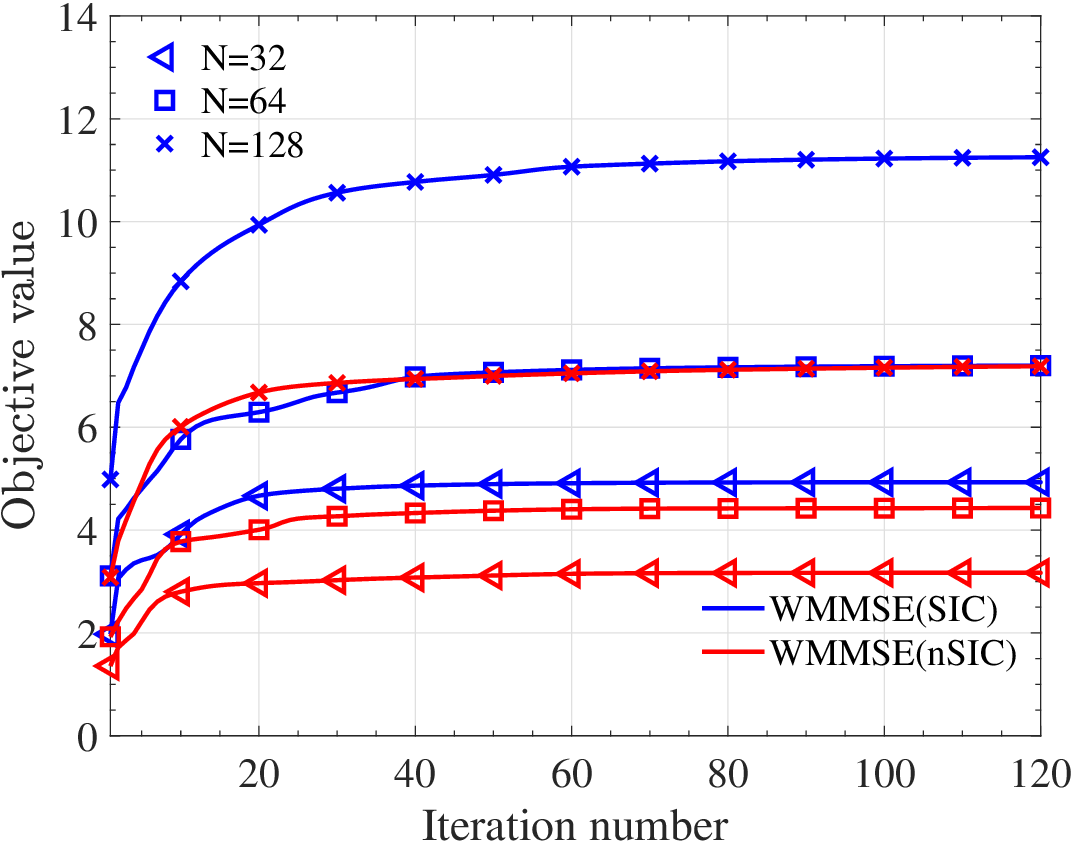}
\caption{Convergence behavior for uplink transmission.}
\label{Convergence_up}
\vspace{-10pt}
\end{figure}

\subsubsection{Convergence Evaluation}
{\figurename} {\ref{Convergence_up}} demonstrates the convergence performance of the proposed WMMSE-based method in Algorithm \ref{Algorithm2} with different numbers of DMA elements $N$. It is shown that the increasing $N$ could reach larger SE performance. Besides, the ergodic sum rate achieved by the MMSE-SIC decoder is higher than that achieved by the MMSE-nSIC decoder, but the convergence speed of both two cases is similar.

\subsection{Downlink Transmission}
Having evaluated the performance for the uplink transmission, we then appraise the SE performance of Algorithm \ref{Algorithm3} for the downlink transmission with statistical CSI. For the PDD-based algorithm, we set the initial penalty factor $\beta=10^5$, initial $h=1$, $\eta=1$, scaling factors $c_1=0.5,c_2=1/6$, and threshold $\epsilon=10^{-5}$. Unless otherwise specified, we assume $S=8$ and $L=8$.

\begin{figure}[!t]
\centering
\setlength{\abovecaptionskip}{5pt}
\includegraphics[height=0.32\textwidth]{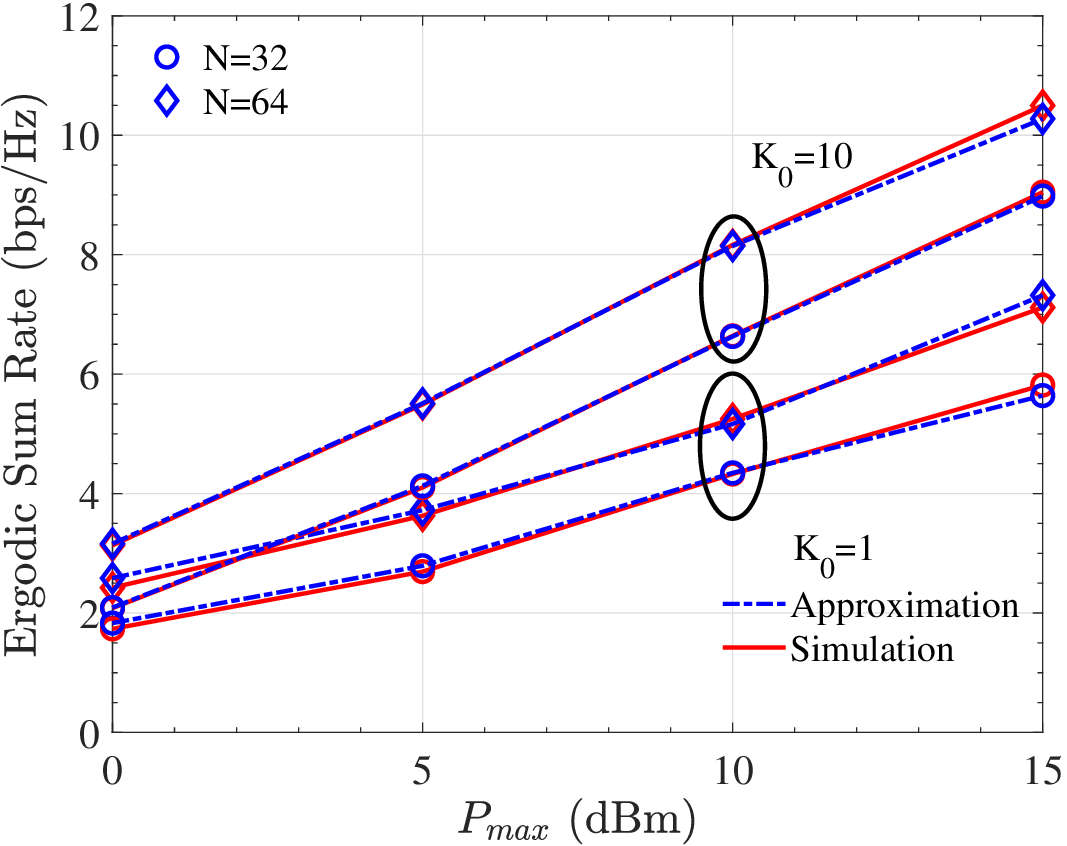}
\caption{The validation of the downlink ergodic sum rate.}
\label{Analytic_d}
\end{figure}

\subsubsection{Validation of the Asymptotic Analysis}
We first use {\figurename} {\ref{Analytic_d}} to evaluate the accuracy of the asymptotic analysis comprehensively. {\color{black}Specifically, we use the proposed PDD-based algorithm to optimize the phase shifts of the DMA and digital beamforming while considering different Rician factor $K_0$ and the number of DMA elements $N$. As illustrated, the analytical and simulated curves match well in the cases of different number of elements and Rician factors, which validates the accuracy of the approximation in Corollary \ref{corollary1}.}

\begin{figure}[!t]
\centering
\setlength{\abovecaptionskip}{5pt}
\includegraphics[height=0.32\textwidth]{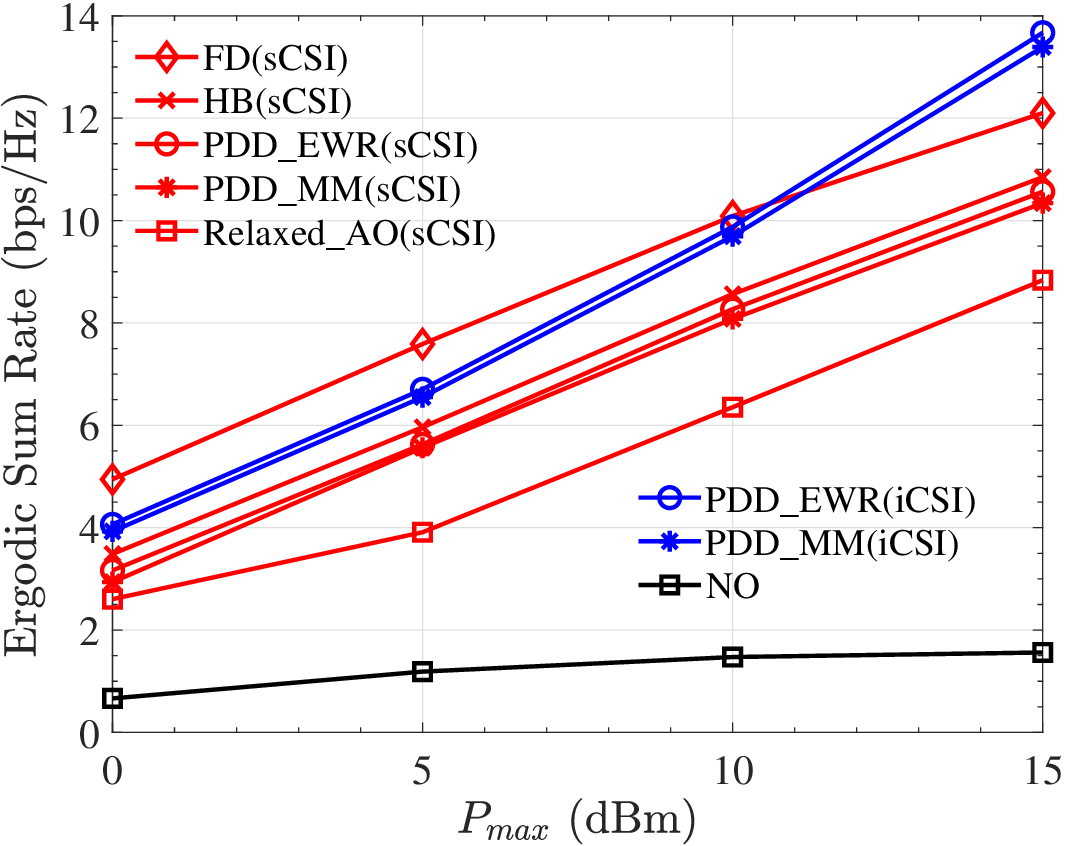}
\caption{The performance comparison with benchmarks.}
\label{Compare_d}
\end{figure}
{\color{black}
\subsubsection{Performance Evaluation}
In {\figurename} {\ref{Compare_d}}, we evaluate the ergodic sum rate $\mathcal{R}_\text{d}$ in \eqref{R_d}, achieved by the proposed PDD-based algorithm. The following schemes are included  for comparison:
\begin{itemize}
\item FD (sCSI): Full-digital beamforming based on statistical CSI \cite{Zhang2022}.
\item HB (sCSI): Hybrid beamforming with partially-connected phase shifters, using statistical CSI \cite{Zhang2022}.
\item PDD\_EWR (sCSI): The proposed PDD-based algorithm with statistical CSI, where the DMA coefficient matrix is optimized using the proposed EWR-based method.
\item PDD\_MM (sCSI): The PDD-based algorithm with statistical CSI, where the DMA coefficient matrix is optimized via the MM method \cite{Xu2024}.
\item Relaxed\_AO (sCSI): An alternating optimization algorithm with statistical CSI, in which the digital precoder $\mathbf{W}$ and DMA weight matrix $\mathbf{Q}$ are optimized sequentially by relaxing the power constraint to $\sum\nolimits_{k=1}^{K}\|\mathbf{w}_k\|^2\leq P_\text{max}$ \cite{Zhang2021,Zhang2022,Kimaryo2023}.
\item PDD\_EWR (iCSI): The PDD-based algorithm with instantaneous CSI \cite{Shi2020}, where the DMA coefficient matrix is optimized using the proposed EWR-based method.
\item PDD\_MM (iCSI): The PDD-based algorithm with instantaneous CSI \cite{Shi2020}, where the DMA coefficient matrix is optimized via the MM method.
\item NO: A baseline scheme is considered, where the DMA coefficient $q_{l,s}$ in \eqref{cons_Q} is set as $1$ for $\forall l \in \mathcal{L}$ and $\forall s \in \mathcal{S}$, and the digital precoding matrix $\mathbf{W}$ adopts uniform power allocation.
\end{itemize}

As shown, the proposed PDD-based algorithm outperforms the relaxed alternating optimization method. Furthermore, within the same PDD framework, the proposed EWR-based algorithm performs achieves slightly better performance than the MM-based approach. These results demonstrate the effectiveness of the proposed algorithms. It is noted that the performance gap between the schemes using instantaneous CSI and statistical CSI increases with transmit power, but remains within an acceptable range. In addition, the DMA-enabled downlink system performs slightly worse than the PS-based hybrid A/D beamforming system due to its limited beamforming flexibility. However, thanks to the significantly lower power consumption of DMA elements compared to phase shifters, the energy efficiency of DMA-enabled systems surpasses that of PS-based multi-antenna systems \cite{You2023,Chen2025}, as will be analyzed in the next section.
}
\begin{figure}[!t]
\centering
\setlength{\abovecaptionskip}{5pt}
\includegraphics[height=0.32\textwidth]{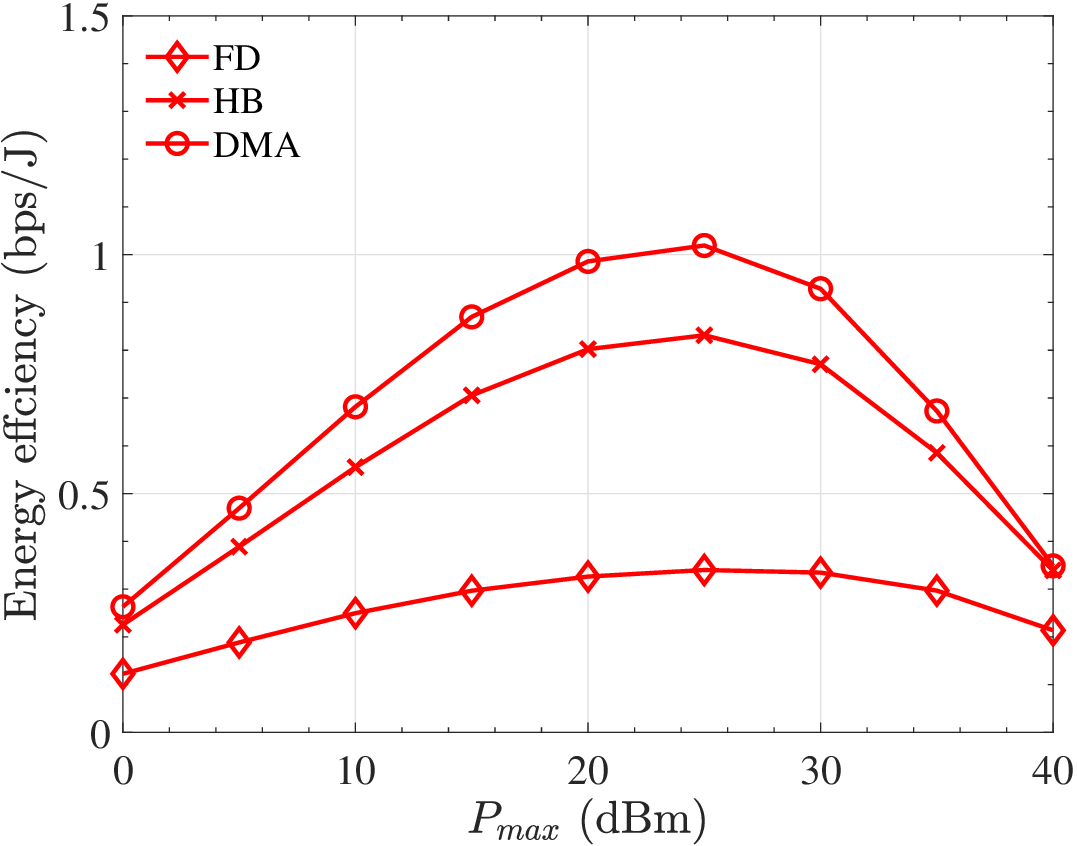}
\caption{Energy efficiency versus $P_\text{max}$.}
\label{EE}
\end{figure}

{\color{black}
\subsubsection{Comparison of EE with Different Antenna Architecture}
In {\figurename} {\ref{EE}}, we compare the EE performance of full-digital antenna systems, PS-based partially-connected hybrid systems, and DMA-enabled systems. Specifically, the EE is defined as
\begin{align}
\eta=\frac{\mathcal{R}_\text{d}}{P_\text{tot}},
\end{align}
where $\mathcal{R}_\text{d}$ is the ergodic sum rate obtained using statistical CSI, and $P_\text{tot}$ denotes the total power consumption. This includes the transmit power $P_\text{max}$, the power consumed by RF chains $P_\text{RF}$ (e.g., amplifiers and mixers), and the static circuit power consumption $P_\text{BS}$ \cite{You2023,Chen2025}. Based on this, the total power consumption of the three systems is given by
\begin{align}
P_\text{tot}^\text{FD}&=\frac{1}{\epsilon}P_\text{max}+NP_\text{RF}+P_\text{BS},\\
P_\text{tot}^\text{HB}&=\frac{1}{\epsilon}P_\text{max}+LP_\text{RF}+NP_\text{PS}+P_\text{BS},\\
P_\text{tot}^\text{DMA}&=\frac{1}{\epsilon}P_\text{max}+LP_\text{RF}+P_\text{BS},
\end{align}
where $\epsilon$ is amplifier efficiency factor and $P_\text{PS}$ represents the power consumption of the phase shifters used in hybrid beamforming. Following the parameters in \cite{Chen2025}, we set $P_\text{RF}=27$ dBm, $P_\text{BS}=39$ dBm, $P_\text{PS}=17$ dBm, and $\epsilon=0.35$. As illustrated, the DMA-enabled systems outperform both PS-based and full-digital antenna systems in terms of EE performance, underscores the advantage of DMA in reducing power consumption, highlighting the potential of DMA as a power-efficient alternative for next-generation wireless systems.
}

\begin{figure}[!t]
\centering
\setlength{\abovecaptionskip}{5pt}
\includegraphics[height=0.32\textwidth]{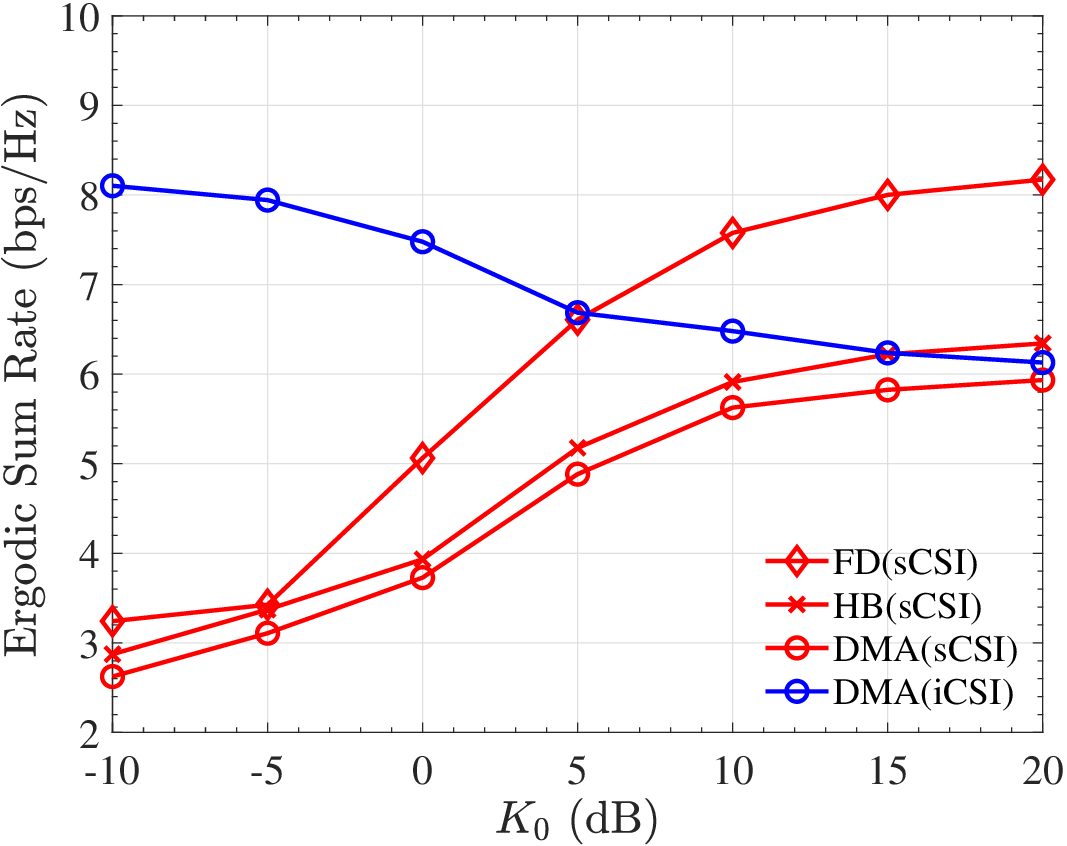}
\caption{Ergodic sum rate versus Rician factors for downlink transmission with $N=64$ and $P_\text{max}=5$ dBm.}
\label{Downlink_versus_K}
\end{figure}
{\color{black}
\subsubsection{Impact of the Rician Factor $K_0$}
To evaluate the impact of the Rician factor $K_0$, we simulate the ergodic sum rate versus $K_0$ in {\figurename} {\ref{Downlink_versus_K}}. It is worth noting that the ergodic sum rate with statistical CSI increases with $K_0$. As $K_0$ grows, the line-of-sight (LoS) component becomes more dominant. Although inter-user interference also increases, the enhanced spatial focusing of the statistical channels leads to improved performance. Moreover, the performance gap between using instantaneous CSI and statistical CSI narrows as $K_0$ increases and tends to vanish when the LoS component is strong. This indicates that exploiting statistical CSI is a practical choice in high-$K_0$ scenarios, achieving a favorable trade-off between performance and channel estimation overhead.

However, when the NLoS component dominates, or in pure Rayleigh fading environments, the performance degradation under statistical CSI becomes significant. In such cases, statistical CSI-based designs are not recommended\footnote{\color{black}When the channel follows Rayleigh fading or Rician fading with a dominant NLoS component, a two-timescale design approach \cite{Zhao2021} can be considered to reduce channel estimation overhead and real-time optimization complexity while maintaining good sum rate performance. This will be investigated in our future work.}.

}

\begin{figure}[!t]
\centering
\setlength{\abovecaptionskip}{5pt}
\includegraphics[width=0.43\textwidth]{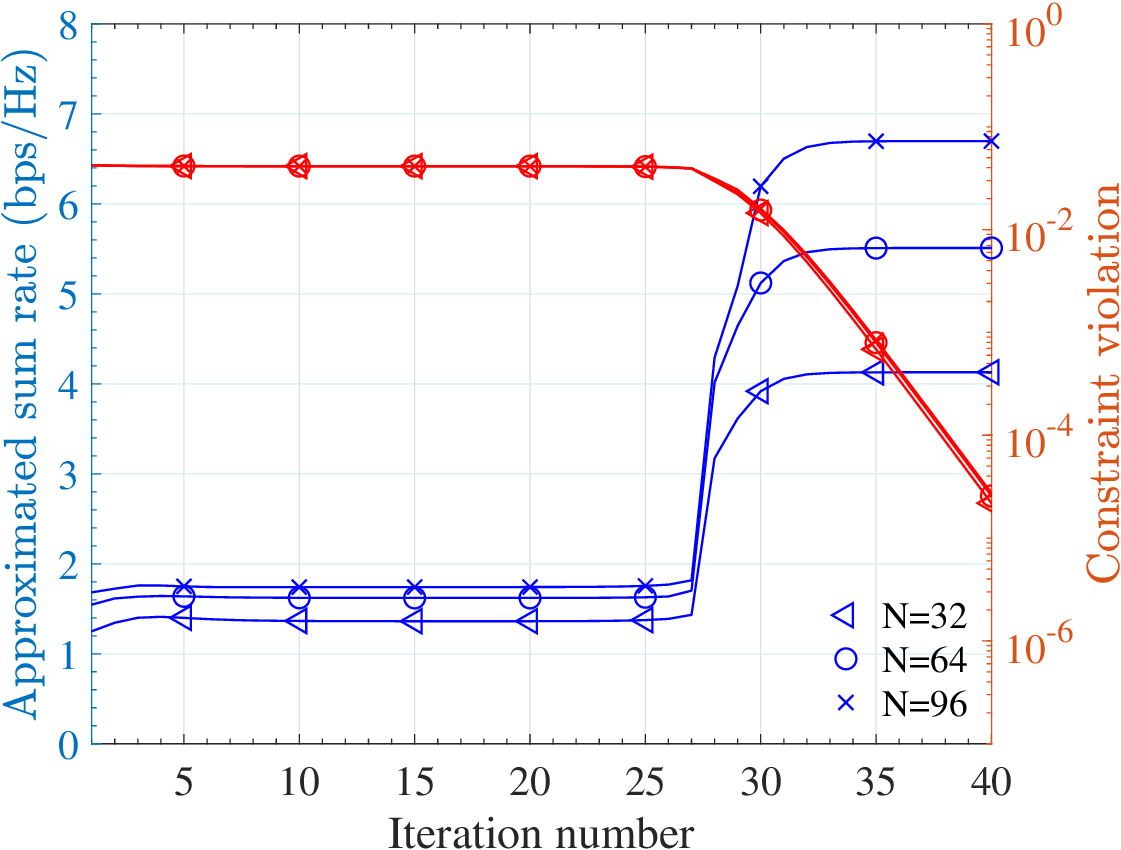}
\caption{Convergence behavior with $P_\text{max}=5$ dBm.}
\label{PDD_convergence}
\end{figure}

\subsubsection{Convergence Performance}
Finally, we use {\figurename} {\ref{PDD_convergence}} to indicate the convergence performance of the proposed PDD-based algorithm for downlink transmission. As shown in {\figurename} {\ref{PDD_convergence}}, the approximated sum rate $\tilde{\mathcal{R}}_\text{d}$ obtained by the PDD-based algorithm converges rapidly in less than $35$ outer iterations. Besides, the constraint violation $h$ reduces to a threshold $\eta=10^{-5}$ in less than $40$ outer iterations, which demonstrates a fast convergence speed.

\section{Conclusion}  \label{Section6}
In this paper, we studied the SE performance of the DMA-enabled MU-MISO wireless systems with statistical CSI, considering both uplink and downlink transmissions. For each scenario, we first formulate effective approximations and then propose efficient optimization algorithms to maximize the derived surrogates, specifically, the WMMSE-based algorithm for the uplink and PDD-based algorithm for the downlink. Numerical results demonstrate that our approximations closely match the simulation outcomes, and the proposed algorithms outperform state-of-the-art baseline schemes in both uplink and downlink settings. {\color{black}For scenarios with a strong LoS component, the proposed statistical CSI-based design achieves satisfactory SE performance, offering a favorable trade-off between performance and channel estimation overhead. However, when the NLoS component dominates, the statistical CSI-based approach suffers significant performance degradation compared to real-time adjustment with instantaneous CSI, particularly in downlink transmissions. Therefore, relying only on statistical CSI for system design in such environments is not recommended.}

\begin{appendix}

\subsection{Proof of Theorem \ref{Theorem_1}}\label{Proof_A}
Since $\log\det(\cdot)$ is a concave function, we could utilize Jensen's inequality to obtain the upper bound of $\mathcal{R}_{\text{sic}}$ in \eqref{R_SIC}:
\begin{align}\label{Jensen}
\mathcal{R}_{\text{sic}} &\leq \overline{\mathcal{R}}_{\text{sic}} \\
&=\log\det\left( \mathbf{I}_L+\mathbf{Q}^{\mathsf H} \mathbf{H}^{\mathsf H} \sum_{k=1}^{K} \mathbbmss{E}\left\{\mathbf{g}_k \mathbf{g}_k^{\mathsf H} \right\} \mathbf{H}\mathbf{Q}(\mathbf{P})^{-1} \right),\nonumber
\end{align}
where $\mathbbmss{E}\left\{\mathbf{g}_k \mathbf{g}_k^{\mathsf H} \right\}$ can be calculated as
\begin{align}\label{expectation}
\mathbbmss{E}\left\{\mathbf{g}_k \mathbf{g}_k^{\mathsf H} \right\}&=\mathbbmss{E}\left\{\left(\sqrt{\alpha_k\frac{K_0}{1+K_0}} \bar{\mathbf{g}}_k+ \sqrt{\alpha_k\frac{1}{1+K_0}}\tilde{\mathbf{g}}_k\right) \right. \nonumber\\
& \left. \cdot \left( \sqrt{\alpha_k\frac{K_0}{1+K_0}}\bar{\mathbf{g}}_k+ \sqrt{\alpha_k\frac{1}{1+K_0}}\tilde{\mathbf{g}}_k\right)^{\mathsf H} \right\} \nonumber \\
&= \frac{\alpha_k K_0}{1+K_0}\bar{\mathbf{g}}_k\bar{\mathbf{g}}_k^{\mathsf H}+\frac{\alpha_k}{1+K_0} \mathbf{R}_k.
\end{align}
By denoting $\tilde{\mathbf{G}}_k=\left[\sqrt{\frac{\alpha_k K_0}{1+K_0}}\bar{\mathbf{g}}_k,\sqrt{\frac{\alpha_k }{1+K_0}}\mathbf{R}_k^{1/2}\right]\in \mathbbmss{C}^{N \times (N+1)}$, we have $\tilde{\mathbf{G}}_k\tilde{\mathbf{G}}_k^{\mathsf H}=\mathbbmss{E}\left\{\mathbf{g}_k \mathbf{g}_k^{\mathsf H} \right\}$. Let $\mathbf{G}=\left[\tilde{\mathbf{G}}_1,\dots,\tilde{\mathbf{G}}_K \right] \in \mathbbmss{C}^{N \times K(N+1)}$, then we can obtain $\mathbf{G}\mathbf{G}^{\mathsf H}=\sum_{k=1}^{K} \tilde{\mathbf{G}}_k\tilde{\mathbf{G}}_k^{\mathsf H}$. The final results follow immediately.

\subsection{Proof of Theorem \ref{Theorem_2}}\label{Proof_B}
By denoting $\mathbf{A}=\sum_{i=1}^{K}\mathbf{Q}^{\mathsf H} \mathbf{H}^{\mathsf H}\mathbf{g}_i \mathbf{g}_i ^{\mathsf H} \mathbf{HQ}+ \mathbf{P}$ and $\mathbf{B}_k=\sum_{i\neq k}^{K}\mathbf{Q}^{\mathsf H} \mathbf{H}^{\mathsf H}\mathbf{g}_i \mathbf{g}_i ^{\mathsf H} \mathbf{HQ}+ \mathbf{P}$, we have
\begin{align} \label{R_nsic_2}
\mathcal{R}_\text{nsic}=\sum_{k=1}^{K}\mathbbmss{E} \left\{\log\det(\mathbf{A})- \log\det(\mathbf{B}_k)\right\}.
\end{align}
By exploiting Jensen’s inequality, we obtain the following relationships:
\begin{align}
\mathbbmss{E} \left\{\log\det(\mathbf{A})\right\}\leq \log\det\left(\sum_{i=1}^{K}\mathbf{Q}^{\mathsf H} \mathbf{H}^{\mathsf H}\tilde{\mathbf{G}}_i \tilde{\mathbf{G}}_i^{\mathsf H} \mathbf{HQ}+ \mathbf{P}\right), \nonumber \\
\mathbbmss{E} \left\{\log\det(\mathbf{B}_k)\right\}\leq \log\det\left(\sum_{i\neq k}^{K}\mathbf{Q}^{\mathsf H} \mathbf{H}^{\mathsf H}\tilde{\mathbf{G}}_i \tilde{\mathbf{G}}_i^{\mathsf H} \mathbf{HQ}+ \mathbf{P}\right).  \nonumber
\end{align}
Instituting the above two inequations into \eqref{R_nsic_2} and after some basic mathematical manipulations, we can get the approximation of the sum rate with MMSE-nSIC decoding, i.e., $\tilde{\mathcal{R}}_\text{nsic}$ in \eqref{MMSE_nSIC_approximate}.

\subsection{Proof of Theorem \ref{Theorem_4}}\label{Proof_D}
To prove this theorem, we first introduce the following lemma.
\begin{lemma} \label{Lagrangian}
Problem $\mathcal{P}_\text{d}$ is equivalent to
\begin{subequations}
\begin{align}
{\mathcal P}_t: ~\max_{\mathbf{Q},\mathbf{W},\bm\rho} &~ \mathcal{F}_0 (\mathbf{Q},\mathbf{W},\bm\rho) =\sum_{k=1}^{K} \log_2(1+\rho_k) \nonumber\\
&~ -\sum_{k=1}^{K}\rho_k +\sum_{k=1}^{K}(1+\rho_k) \mathcal{B}_k \label{Pt_obj}\\
{\rm{s.t.}}&~ \eqref{cons_Q},\eqref{cons_Q2},~ \sum\nolimits_{k=1}^{K}\|\mathbf{HQ}\mathbf{w}_k\|^2\leq P_\text{max},
\end{align}
\end{subequations}
where $\mathcal{B}_k=\left(\sum_{i=1}^{K}\mathbf{w}_i^{\mathsf H}\mathbf{H}_k \mathbf{w}_i+N_k\right)^{-1}\mathbf{w}_k^{\mathsf H} \mathbf{H}_k\mathbf{w}_k$.
\end{lemma}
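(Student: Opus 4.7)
The plan is to invoke the fractional programming (FP) framework of Shen and Yu in two successive stages, each of which introduces one layer of auxiliary variables by inverting a concave/convex identity at its first-order stationary point. The first stage (the Lagrangian dual transform) peels off the outer logarithm, and the second stage (the multidimensional quadratic transform) linearizes the residual Rayleigh-quotient-like expression. Since the problem objective $\tilde{\mathcal R}_\text{d}$ in \eqref{R_d_approximate} has exactly the standard sum-of-logarithms-of-ratios form, the two transforms apply verbatim.

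The first stage is precisely Lemma \ref{Lagrangian}, which I would therefore take as the starting point. To motivate it for the reader, I would remark that for each $k$ the scalar identity $\log_2(1+x)=\max_{\rho\ge 0}\log_2(1+\rho)-\rho+(1+\rho)\frac{x}{1+x}$ holds with the maximizer $\rho^\star=x$, and that specializing $x=\mathrm{SINR}_k=\mathbf{w}_k^{\mathsf H}\mathbf{H}_k\mathbf{w}_k/(\sum_{i\neq k}\mathbf{w}_i^{\mathsf H}\mathbf{H}_k\mathbf{w}_i+N_k)$ and summing over $k$ gives $\mathcal{F}_0$. The interchange $\sum_k\max_{\rho_k}=\max_{\bm\rho}\sum_k$ is immediate because the $\rho_k$'s decouple, and none of the constraints of $\mathcal{P}_\text{d}$ involve $\bm\rho$, so the equivalence $\mathcal{P}_\text{d}\Leftrightarrow\mathcal{P}_t$ follows.

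The second stage converts $\mathcal{B}_k=\mathbf{w}_k^{\mathsf H}\mathbf{H}_k\mathbf{w}_k/(\sum_{i}\mathbf{w}_i^{\mathsf H}\mathbf{H}_k\mathbf{w}_i+N_k)$ into $\mathcal{A}_k$. Writing the numerator as $\|\tilde{\mathbf G}_k^{\mathsf H}\mathbf H\mathbf Q\mathbf w_k\|^2$ makes it a squared Euclidean norm, so the multidimensional quadratic transform applies: for any $\mathbf a\in\mathbbmss{C}^{N+1}$ and any scalar $b>0$,
\begin{equation*}
\frac{\|\mathbf a\|^2}{b}=\max_{\bm\gamma\in\mathbbmss{C}^{N+1}}\ 2\Re(\bm\gamma^{\mathsf H}\mathbf a)-b\,\bm\gamma^{\mathsf H}\bm\gamma,
\end{equation*}
with optimizer $\bm\gamma^\star=\mathbf a/b$. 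Taking $\mathbf a=\tilde{\mathbf G}_k^{\mathsf H}\mathbf H\mathbf Q\mathbf w_k$ and $b=\sum_i\mathbf w_i^{\mathsf H}\mathbf Q^{\mathsf H}\mathbf H^{\mathsf H}\tilde{\mathbf G}_k\tilde{\mathbf G}_k^{\mathsf H}\mathbf H\mathbf Q\mathbf w_i+N_k$ identifies the right-hand side with $\mathcal{A}_k$ exactly as stated. Substituting $(1+\rho_k)\mathcal{B}_k=\max_{\bm\gamma_k}(1+\rho_k)\mathcal{A}_k$ into $\mathcal{F}_0$ and swapping the independent maxima over the decoupled $\bm\gamma_k$'s with the outer maximization yields $\mathcal{F}_1$ and thus $\mathcal{P}_3$.

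The only genuine concern is justifying the two max-interchange steps, which requires verifying that (i) the feasible set of $\mathcal{P}_3$ is a Cartesian product of the original feasible set with the unconstrained $(\bm\rho,\bm\Gamma)$ domains, and (ii) the pointwise maxima over $\rho_k$ and $\bm\gamma_k$ are attained at a unique stationary point. Both are straightforward here: the DMA constraints \eqref{cons_Q}\eqref{cons_Q2} and the power constraint involve only $\mathbf Q$ and $\mathbf W$, and strict concavity of $\mathcal{F}_1$ in each $\rho_k$ and each $\bm\gamma_k$ (for fixed remaining variables and $b>0$, which holds because $N_k>0$) guarantees unique maximizers that recover the original expressions. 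This verification is the main technical hurdle, but it is routine rather than subtle; once done, the equivalence $\mathcal{P}_\text{d}\Leftrightarrow\mathcal{P}_t\Leftrightarrow\mathcal{P}_3$ is established.
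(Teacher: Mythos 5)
Your proof is correct and is essentially the paper's own argument: both establish the equivalence by partially maximizing $\mathcal{F}_0$ over each $\rho_k$ (concave, unconstrained, and decoupled across $k$ since the constraints involve only $\mathbf{Q}$ and $\mathbf{W}$), finding the maximizer $\rho_k^\star=\mathrm{SINR}_k$, and observing that back-substitution recovers $\tilde{\mathcal{R}}_\text{d}$ --- the paper phrases this via first-order stationarity, you via the equivalent scalar identity, which is the same computation. The only caveat, shared with the paper, is that the identity $\log(1+x)=\max_{\rho\ge 0}\,\log(1+\rho)-\rho+(1+\rho)\frac{x}{1+x}$ with maximizer $\rho^\star=x$ holds for the natural logarithm, so writing the objective with $\log_2$ strictly requires a $1/\ln 2$ factor on the $-\rho_k$ and $(1+\rho_k)\mathcal{B}_k$ terms; this affects neither the equivalence nor the optimizers.
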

\begin{proof}
It is noted that $\mathcal{F}_0$ is concave over $\bm\rho$ with other variables fixed, thus the optimal $\bm\rho$ can be obtained by setting $\frac{\partial \mathcal{F}_0}{\partial \rho_k}=0$, which results in $\rho_k^{\star}=\frac{\mathcal{D}_k}{1-\mathcal{D}_k}$. By exploiting the Woodbury formula, we have $\rho_k^{\star}=\frac{\mathbf{w}_k^{\mathsf H}\mathbf{H}_k\mathbf{w}_k}{\sum_{i\neq k}^{K} \mathbf{w}_i^{\mathsf H}\mathbf{H}_k \mathbf{w}_i+N_k}$ with $\mathbf{H}_k=\mathbf{Q}^{\mathsf H}\mathbf{H}^{\mathsf H} \tilde{\mathbf{G}}_k \tilde{\mathbf{G}}_k^{\mathsf H} \mathbf{H} \mathbf{Q}$. It is worth noting that instituting $\rho_k^\star$ into \eqref{Pt_obj} recovers the objective function of $\mathcal{P}_\text{d}$ in \eqref{P3_obj}, which means that $\mathcal{P}_\text{d}$ and $\mathcal{P}_t$ have the same optimal objective value and optimal solution over $\mathbf{Q}$ and $\mathbf{W}$, i.e., $\max\limits_{\mathbf{Q},\mathbf{W},\bm\rho} \mathcal{F}_0 =\max\limits_{\mathbf{Q},\mathbf{W}} \tilde{\mathcal{R}}_\text{d}$ and  $\argmax\limits_{\mathbf{Q},\mathbf{W}} \mathcal{F}_0 (\mathbf{Q},\mathbf{W},\bm\rho_k^\star)=\argmax \limits_{\mathbf{Q},\mathbf{W}} \tilde{\mathcal{R}}_d$. These results suggest the two problems are equal.
\end{proof}
Lemma \ref{Lagrangian} establishes the equivalence of $\mathcal{P}_t$ and $\mathcal{P}_\text{d}$. Next we prove the equivalence of $\mathcal{P}_3$ and $\mathcal{P}_t$. Similarly, $\mathcal{F}_1$ is a concave differentiable function over $\bm\gamma_k$ with other variables fixed. Therefore, we let each $\frac{\partial \mathcal{F}_1}{\partial \bm\gamma_k}=0$ and the optimal $\bm\gamma_k^\star$ is given by $\bm\gamma_k^{\star}=\left(\sum_{i=1}^{K}\mathbf{w}_i^{\mathsf H}\mathbf{H}_k \mathbf{w}_i+N_k\right)^{-1} \tilde{\mathbf{G}}_k^{\mathsf H} \mathbf{H} \mathbf{Q} \mathbf{w}_k$. Inserting $\bm\gamma_k^\star$ back to $\mathcal{F}_1$ recovers the
objective function in \eqref{Pt_obj}, which means $\mathcal{P}_3$ and $\mathcal{P}_t$ are two equal problems, the equivalence between $\mathcal{P}_3$ and $\mathcal{P}_\text{d}$ is thus established.
\end{appendix}

%\section*{References}

\vspace{12pt}

\end{document}
%\documentclass[journal]{IEEEtran}
%\IEEEoverridecommandlockouts
%\usepackage{lineno}
%\usepackage{cite}
%\usepackage{hyperref}
%\usepackage{amsmath,amssymb,amsfonts}
%\usepackage{amsmath}
%\usepackage{amsthm}
%\DeclareMathOperator*{\argmax}{argmax}
%\DeclareMathOperator{\Tr}{\textsf{tr}}
%\usepackage{algorithmic}
%\usepackage{graphicx}
%\usepackage{textcomp}
%\usepackage{xcolor}
%\usepackage{graphicx}
%\usepackage{float}
%\usepackage{subfigure}
%\usepackage{amsmath}
%\usepackage{amsfonts,amssymb}
%\usepackage{mathrsfs}
%\usepackage{mathtools}
%\usepackage{algorithm}
%\usepackage{algorithmicx}
%\usepackage{algpseudocode}
%\usepackage{bm}
%\usepackage{multirow}
%\usepackage{array}
%\usepackage{amssymb}
%\usepackage{amsmath}
%\usepackage{cite}
%\usepackage{url}
%\usepackage{xcolor}
%\usepackage{cite,graphicx,amsmath,amssymb}
%\usepackage{subfigure}
%\usepackage{fancyhdr}
%\usepackage{mdwmath}
%\usepackage{mdwtab}
%\usepackage{caption}
%\usepackage{amsthm}
%\usepackage{setspace}
%\usepackage{dsfont}
%\usepackage{bbm}
%\newtheorem{remark}{Remark}
%\newtheorem{corollary}{Corollary}
%\newenvironment{corollarybox} {\begin{corollary}}{\hfill \interlinepenalty500 $\Box$\end{corollary}}
%\theoremstyle{definition}
%\newtheorem{theorem}{Theorem}
%\newenvironment{theorembox} {\begin{theorem}}{\hfill \interlinepenalty500 $\Box$\end{theorem}}
%\newtheorem{lemma}{Lemma}
%\newenvironment{lemmabox} {\begin{lemma}}{\hfill \interlinepenalty500 $\Box$\end{lemma}}
%\newtheorem{proposition}{Proposition}
%\makeatletter
%\newcommand{\biggg}{\bBigg@{3}}
%\newcommand{\Biggg}{\bBigg@{3.5}}
%\makeatother
%\hyphenation{op-tical net-works semi-conduc-tor} 